\newtheorem{lemma}{Lemma}
\newtheorem{cor}{Corollary}
\newtheorem{defi}{Definition}
\newtheorem{theorem}{Theorem}
\theoremstyle{remark}
\theoremstyle{problem}
\newcommand{\R}{\mathbb{R}}
\def \real    { \mathbb{R} }
\newcommand{\C}{\mathbb{C}}
\newcommand{\e}{\begin{equation}}
\newcommand{\ee}{\end{equation}}
\newcommand{\en}{\begin{equation*}}
\newcommand{\een}{\end{equation*}}
\newcommand{\eqn}{\begin{eqnarray}}
\newcommand{\eeqn}{\end{eqnarray}}
\newcommand{\bmat}{\begin{bmatrix}}
\newcommand{\emat}{\end{bmatrix}}
\DeclareMathAlphabet\mathbfcal{OMS}{cmsy}{b}{n}
\renewcommand{\P}[1]{\operatorname{\mathbb{P}}\left(#1\right)}
\newcommand{\E}{\operatorname{\mathbb{E}}}
\newcommand{\vct}[1]{\boldsymbol{#1}}
\newcommand{\mtx}[1]{\boldsymbol{#1}}
\newcommand{\<}{\langle}
\renewcommand{\>}{\rangle}
\newcommand{\trace}{\operatorname{trace}}
\newcommand{\set}[1]{\mathbb{#1}}
\DeclareMathOperator*{\argmin}{\text{arg~min}}
\newcommand{\wh}{\widehat}
\newcommand{\wt}{\widetilde}
\newcommand{\ol}{\overline}
\newcommand{\nqbit}{n}
\newcommand{\innerprod}[2]{\left\langle #1,  #2 \right\rangle}
\newcommand{\calA}{\mathcal{A}}
\newcommand{\calC}{\mathcal{C}}
\newcommand{\calN}{\mathcal{N}}
\newcommand{\calP}{\mathcal{P}}
\newcommand{\calX}{\mathcal{X}}
\newcommand{\va}{\vct{a}}
\newcommand{\vp}{\vct{p}}
\newcommand{\vr}{\vct{r}}
\newcommand{\vu}{\vct{u}}
\newcommand{\vw}{\vct{w}}
\newcommand{\vx}{\vct{x}}
\newcommand{\veta}{\vct{\eta}}
\newcommand{\vrho}{\vct{\rho}}
\newcommand{\vzero}{\vct{0}}
\newcommand{\mA}{\mtx{A}}
\newcommand{\mB}{\mtx{B}}
\newcommand{\mE}{\mtx{E}}
\newcommand{\mH}{\mtx{H}}
\newcommand{\mN}{\mtx{N}}
\newcommand{\mS}{\mtx{S}}
\newcommand{\mX}{\mtx{X}}
\newcommand{\mSigma}{\mtx{\Sigma}}
\newcommand{\mId}{{\bf I}}
\newcommand{\setQ}{\set{Q}}
\newcommand{\setS}{\set{S}}
\newcommand{\setX}{\set{X}}
\renewcommand{\ss}{{\vspace*{-1mm}}}
\newlength{\imgwidth}
\newcommand{\twoCol}[2]{\ifthenelse{\boolean{twoColVersion}} {#1} {#2} }
\def\@IEEEsectpunct{\ \,}
\def\paragraph{\@startsection{paragraph}{4}{\z@}{1.5ex plus 1.5ex minus 0.5ex}%
{0ex}{\normalfont\normalsize\sffamily\bfseries}}
\title{Sample-Efficient Quantum State Tomography for Structured Quantum States in One Dimension}
\author{Zhen~Qin, Casey~Jameson, Alireza Goldar, Michael~B.~Wakin,  Zhexuan~Gong, and~Zhihui~Zhu
\thanks{Zhen Qin is with the Michigan Institute for Computational Discovery and Engineering, Department of Electrical Engineering and Computer Science, Department of Statistic, University of Michigan, Ann Arbor, MI 48109 USA, and also with the Department of Computer Science and Engineering, the Ohio State University, Columbus OH 43210 USA. (e-mail: zhenqin@umich.edu).}
\thanks{Zhihui Zhu is with the Department of Computer Science and Engineering, Ohio State University, Columbus, Ohio 43201, USA. (e-mail:zhu.3440@osu.edu).}
\thanks{Casey Jameson and Zhexuan Gong are with the Department of Physics, Colorado School of Mines, Golden, Colorado 80401, USA. (e-mail:\{cwjameson,gong\}@mines.edu).}
\thanks{Alireza Goldar and Michael B. Wakin are with the Department of Electrical Engineering, Colorado School of Mines, Golden, Colorado 80401, USA. (e-mail:\{goldar,mwakin\}@mines.edu).}}
\begin{document}

\maketitle

\begin{abstract}
  While quantum state tomography (QST) remains the gold standard for benchmarking and verifying quantum devices, it requires an exponentially large number of measurements and classical computational resources for generic quantum many-body systems, making it impractical even for intermediate-size quantum devices. Fortunately, many physical quantum states often exhibit certain low-dimensional structures that enable the development of efficient QST. A notable example is the class of states represented by matrix product operators (MPOs) with a finite matrix/bond dimension, which include most physical states in one dimension and where the number of independent parameters describing the states only grows linearly with the number of qubits. Whether a sample efficient quantum state tomography protocol, where the number of required state copies scales only linearly as the number of parameters describing the state, exists for a generic MPO state still remains an important open question.

  In this paper, we answer this fundamental question affirmatively by using a class of informationally complete positive operator-valued measures (IC-POVMs)---including symmetric IC-POVMs (SIC-POVMs) and spherical $t$-designs--- focusing on sample complexity while not accounting for the implementation complexity of the measurement settings. For SIC-POVMs and (approximate) spherical 2-designs, we show that the number of state copies to guarantee bounded recovery error of an MPO state with a constrained least-squares estimator depends on the probability distribution of the MPO under the POVM but scales only linearly with $n$ when the distribution is approximately uniform. For spherical $t$-designs with $t\ge3$, we prove that only a number of state copies proportional to the number of independent parameters in the MPO is sufficient for a guaranteed recovery of \emph{any} state represented by an MPO. Moreover, we propose a projected gradient descent (PGD) algorithm to solve the constrained least-squares problem and show that it can efficiently find an estimate with bounded recovery error when appropriately initialized.
\end{abstract}

\begin{IEEEkeywords}
Quantum state tomography (QST), matrix product operator (MPO), informationally complete positive operator-valued measures (IC-POVMs), stable recovery, projected gradient descent (PGD).
\end{IEEEkeywords}

\section{Introduction}
\label{intro}
The state of a quantum system composed of $n$ qudits (which are $d$-level quantum systems; qubits have $d=2$) is fully described by a density matrix $\vrho\in\C^{d^\nqbit \times d^\nqbit}$ with unit trace and is positive semidefinite (PSD)~\cite{nielsen2002quantum}. To reconstruct or estimate the density matrix, quantum measurements need to be performed on numerous identical copies of the state. Any such physical measurement is characterized by a Positive Operator-Valued Measure (POVM), which can be represented by a set of PSD matrices $\{\mA_1,\ldots,\mA_K\}$ that collectively sum to the identity matrix, i.e., $\sum_{k=1}^K \mA_k = \mId_{d^n}$. Each matrix $\mA_k$ ($k=1,\dots, K$) in the POVM corresponds to a potential measurement outcome, and the probability of obtaining that outcome is given by $p_k = \trace(\mA_k \vrho)$. The probabilistic nature of quantum measurements necessitates multiple measurements (say $M$) with the same POVM to obtain statistically accurate estimates $\wh p_k= \frac{f_k}{M}$ of each $p_k$, where $f_k$ denotes the number of times the $k$-th outcome is observed in the $M$ measurements.
Disregarding statistical errors, $\{p_k\}$ can be considered as $K$ linear measurements of the state~$\vrho$. From a machine learning perspective, we can refer to $\{p_k\}$ and their empirical estimates~$\{\wh p_k\}$ as population and empirical measurements of the state, respectively. The goal of quantum state tomography (QST) \cite{vogel1989determination} is to find the density matrix from empirical measurements with high accuracy. As the density matrix provides full information about the quantum state, QST remains the gold standard for benchmarking and verifying quantum devices.

Extensive research efforts, both in algorithmic development \cite{hradil1997quantum,vrehavcek2001iterative,blume2010optimal,granade2016practical,lukens2020practical,blume2012robust,faist2016practical,kyrillidis2018provable,brandao2020fast,
torlai2018neural,carleo2019machine,lohani2020machine} and theoretical analysis \cite{KuengACHA17,guctua2020fast,francca2021fast,liu2011universal, haah2017sample,voroninski2013quantum}, have been devoted to QST. For generic quantum states, the required state copies must grow exponentially with the number of qudits. This sample complexity can be improved for low-rank density matrices \cite{o2016efficient,haah2017sample,francca2021fast}, but it still grows exponentially with $n$. To implement QST effectively on current quantum computers with over 100 qubits~\cite{preskill2018quantum,arute2019quantum,chow2021ibm}, it is important to exploit more compact structures within practical states of interest.
A notable example is the class of states represented by matrix product operators (MPOs)\footnote{MPOs representing physical quantum states are also termed matrix product density operators (MPDOs), specifically designed for representing mixed states via density matrices. In contrast, matrix product states (MPSs) are used to represent pure states.  }. The density matrix elements of such states can be expressed as the following matrix product \cite{werner2016positive,noh2020efficient}
\begin{eqnarray}
\label{DefOfMPO}
\hspace{-0.5cm}\vrho(i_1 \cdots i_\nqbit, j_1 \cdots j_\nqbit)  =  \mX_1^{i_1,j_1} \mX_2^{i_2,j_2} \cdots \mX_\nqbit^{i_\nqbit,j_\nqbit},
\end{eqnarray}
where $i_k$ ($k=1,2,\cdots n$) labels a basis of the $k$-th qubit, and $\mX_\ell^{i_\ell,j_\ell}\in \C^{r_{\ell-1}\times r_\ell}$ with $r_0 = r_\nqbit = 1$. The dimensions $\vr = (r_1,\ldots,r_{n-1})$ are often called the {\it bond dimensions} \footnote{It is also common to simply call $\overline{r} = \max\{r_1,\ldots,r_{n-1}\}$ the bond dimension.} of the MPO in quantum physics, though we may also call them the {\it MPO ranks}. An MPO comprises $nd^2$ matrices, each with a dimension of at most $\ol r \times \ol r$ where the matrix dimension $\ol r = \max_\ell r_\ell$ determines the efficiency of the representation. Thus, an MPO can be efficiently represented with $O(nd^2\ol r^2)$ number of parameters. The decomposition \eqref{DefOfMPO} is also mathematically equivalent to the tensor train (TT) decomposition used for compact representation of large tensors \cite{Oseledets11}.

Examples of states that can be represented by MPOs include ground states of most quantum systems in one dimension \cite{eisert2010}, general Hamiltonians with decaying long-range interactions \cite{pirvu2010matrix}, and those generated by noisy quantum computers, where the noise could also limit the amount of quantum entanglement and thus enable an efficient state representation \cite{noh2020efficient}.
Several approaches, such as maximum likelihood estimation \cite{baumgratz2013scalable,choi2022single} and tensor train cross approximation \cite{lidiak2022quantum}, have empirically demonstrated the scalable number of state copies when employing MPO in QST. Nevertheless, the theoretical analysis of this advantage has remained ambiguous in previous studies. Recently, the work \cite{qin2024quantum} successfully established a theoretical guarantee on the recovery error of an MPO state using multiple Haar random projective measurements. To be specific, for any given $\epsilon>0$, \cite{qin2024quantum} proves that with  $O(nd^2 \ol{r}^2\log n)$ different sets of Haar random projective measurement bases and $O(n^3 d^2 \ol{r}^2\log n/\epsilon^2)$ total state copies, a properly constrained least-squares minimization using the empirical measurements stably can recover the ground-truth state with $\epsilon$-closeness in the Frobenius norm with high probability. While this result formally shows that MPOs can be reconstructed using polynomial number of state copies with regard to $n$, it is not sample efficient as the scaling $n^3$ is much larger than the degrees of freedom ($O(n)$) in the MPOs. In addition, this work \cite{qin2024quantum} provides a projected gradient descent (i.e., iterative hard thresholding) algorithm for performing the QST in practice, but does not provide a formal guarantee for that algorithm.

\subsection{Main contributions}
\label{sec: main contribution}

Our work aims to address two fundamental questions regarding the QST of states represented by MPOs:
\begin{enumerate}
\item \it Is there a measurement setting that can provably recover an MPO with a number of state copies proportional to the number of parameters in the MPO?
\item \it If so, can we develop a provably correct algorithm to reconstruct MPOs from the obtained measurements?
\end{enumerate}

\begin{table*}[!ht]
\renewcommand{\arraystretch}{1.8}
\begin{center}
\caption{Comparison with existing work for estimating $n$-qudits MPOs (with bond dimension $\ol r$) to achieve $\epsilon$-closeness in the Frobenius norm. The result for SIC/2-designs depends on $\gamma(\vrho)=K\cdot \max_k p_k$, i.e., the probability distribution of measuring the state $\vrho$ with the corresponding POVM.}
\label{Comparison among minimum M for various POVMs}

\vspace{.1in}
\begin{tabular}{|l|l|l|}
\hline
\multicolumn{1}{|c|}{POVM}                                                                                                   & \multicolumn{1}{c|}{$\#$ POVMs}              & \multicolumn{1}{c|}{$\#$ total state copies}                                                            \\ \hline
Haar random (\cite[Theorem 5]{qin2024quantum})                                        & $\Omega(nd^2 \ol{r}^2 \log n)$ & $O(\frac{n^3 d^2 \ol{r}^2\log n }{\epsilon^2})$                  \\ \hline
SIC/$2$-designs (\Cref{Recovery error of various quantum POVM})                            &  \hspace{1cm} $1$                            & $O(\frac{n d^2 \gamma(\vrho)\ol{r}^2 \log n}{\epsilon^2})$ \\ \hline
$t$-designs, $t\geq 3$ (\Cref{Recovery error of various t-designs POVM 3}) & \hspace{1cm} $1$                            & $O(\frac{n d^2 \ol{r}^2 \log n}{\epsilon^2})$                    \\ \hline
\end{tabular}
\end{center}
\end{table*}

We provide positive answers to both the above questions by utilizing informationally complete POVMs (IC-POVM). An IC-POVM is defined by the property that every quantum state is uniquely determined by its measurement statistics. Beyond QST, IC-POVMs with special ties find applications in quantum cryptography \cite{renes2004equiangular}, quantum fingerprinting \cite{scott2005optimal}, and are relevant to foundational studies of quantum mechanics \cite{caves2002unknown, konig2005finetti}. It eliminates the inherent randomness in Haar random POVMs that are often used to establish efficient sample complexity \cite{o2016efficient,haah2017sample}. Without this inherent randomness, we can more easily analyze the effects of statistical errors in the recovery error bound and the design provable optimization algorithms.

We begin by studying recovery guarantees for MPOs from empirical quantum measurements (physical measurements containing statistical errors) obtained using a class of IC-PVOMs including the rank-one symmetric IC-POVM (SIC-POVM) \cite{renes2004symmetric} and more broadly spherical $t$-design POVMs that are induced by (approximate) spherical $t$-designs \cite{zauner1999quantum,dall2014accessible}. {\it The first contribution of this paper---presented in \Cref{sec:stable recovery}---is a new sample complexity bound (refers to the number of state copies) that matches the number of parameters in MPOs to guarantee bounded recovery error for a particular estimator---the solution to a constrained least-squares optimization problem.} Specifically, for SIC-POVM and $2$-design POVMs,  our result shows that $M = O(n d^2 \gamma(\vrho)\ol{r}^2 \log n/\epsilon^2)$ number of state copies is sufficient to achieve $\epsilon$-accuracy in recovering the ground-truth state $\vrho$ in the Frobenius norm, where $\gamma(\vrho) = K \cdot \max_{k=1}^K p_k$ captures the uniformity of the probabilistic distribution of the measurement. The term $\gamma(\vrho)$ can be as small as 1, but in some cases, it may be as large as $d^n$; however, for many cases it is only polynomially large in $n$. The dependence of $\gamma(\vrho)$ could be eliminated by using spherical $t$-designs with $t\ge 3$, for which we only need  $M = O(n d^2 \ol{r}^2 \log n/\epsilon^2)$ to ensure the sufficient condition.  Excluding the logarithmic term, this bound exhibits linear dependence on $n$, $d^2$, and $\ol r^2$, aligning precisely with the degrees of freedom $O(nd^2\ol r^2)$ in MPOs and is thus sample efficient. This result improves upon previous work \cite{qin2024quantum} mentioned above using Haar random projective measurement bases that requires $\Omega(nd^2 \ol{r}^2\log n)$ POVMs and $O(n^3 d^2 \ol{r}^2\log n/\epsilon^2)$ total number of state copies. We summarize the main results and comparison with our previous result \cite{qin2024quantum} in Table~\ref{Comparison among minimum M for various POVMs}. Moreover, physical MPO states are often approximately low-rank, allowing the previous analysis to be extended to the trace distance. Specifically, to ensure a bounded recovery error in terms of trace distance for rank-$s$ MPO states, the required number of state copies remains proportional to the degrees of freedom, $O(nsd^2\ol r^2)$, when using (approximate) spherical $t$-designs with $t \geq 2$.

{\it The second contribution of this paper---presented in \Cref{sec: algorithm}---is the development of a projected gradient descent (PGD) with guaranteed convergence analysis for solving the aforementioned constrained least-squares optimization problem.} In each iteration, the algorithm performs a gradient update, followed by a projection back to the MPO format with a unit trace. Even without the constraint on the trace, computing the optimal projection onto the set of MPOs is challenging and could be NP-hard \cite{hillar2013most}. We employ a sequential approach called TT-SVD \cite{Oseledets11}---developed for computing TT decomposition which is equivalent to MPO---for computing an approximate projection. While TT-SVD is not optimal, it is computationally efficient and comes with a sub-optimal guarantee. Leveraging this guarantee, we show that, with an appropriate initialization, PGD converges linearly to a vicinity of the ground-truth state, with a distance matching the results in Table~\ref{Comparison among minimum M for various POVMs}. In addition, we show that spectral initialization provides a valid starting point, ensuring the convergence of PGD. Our numerical experiments in \Cref{sec: experimental results}
suggest that PGD can also efficiently find the underlying state even with a random initialization.

In this paper, we primarily focus on global spherical $t$-design POVMs with $t\ge 2$, as we theoretically demonstrate that stable recovery of constrained least-squares optimization for all MPO states can be achieved using a polynomial number of state copies. In practice, however, local measurements are widely employed in quantum state tomography due to their efficient implementation and lower computational complexity. Notably, when the Hilbert-Schmidt distance for local marginals is bounded, \cite{fanizza2023learning} has shown that the recovery error in terms of trace distance grows polynomially with $n$ for a translation-invariant state on an infinite one-dimensional chain. Nevertheless, the recent work in \cite{casey2024optimal} reveals that recovering long-range entangled states--such as the general Greenberger-Horne-Zeilinger (GHZ) state, which admits an MPO representation--requires an exponential number of state copies when relying on local measurements. This finding highlights that, both theoretically and practically, local measurements are insufficient to guarantee the recovery of all MPO states using a polynomial number of state copies.

We note that our work is centered on the fundamental question of whether the number of state copies required for QST can be efficient (i.e. proportional to the intrinsic degrees of freedom) for recovering a structured quantum many-body state. We investigate this problem using IC-POVMs, focusing on sample complexity while not accounting for the implementation complexity of the measurement settings. In practice, the SIC-POVM or the spherical $t$-designs with $t\ge 2$ have no known efficient implementation using local quantum circuits. To partially address this practicality concern, we provide numerical evidence that local SIC-POVMs, which can be implemented efficiently and readily on current quantum hardware, can be used to perform QST for typical MPO states with $O(n)$ state copies using the proposed PGD algorithm in Section \ref{sec: experimental results}.

\subsection{Notation}
\label{sec:notation}

We use calligraphic letters (e.g., $\mathcal{X}$) to denote tensors,  bold capital letters (e.g., $\bm{X}$) to denote matrices,  bold lowercase letters (e.g., $\bm{x}$) to denote column vectors, and italic letters (e.g., $x$) to denote scalar quantities. Elements of matrices and tensors are denoted in parentheses. For example, $\calX(i_1, i_2, i_3)$ denotes the element in position
$(i_1, i_2, i_3)$ of the order-3 tensor $\calX$. The calligraphic letter $\calA$ is reserved for the linear measurement map.  For a positive integer $K$, $[K]$ denotes the set $\{1,\dots, K \}$. The superscripts $(\cdot)^\top$ and $(\cdot)^\dagger$ denote the transpose and Hermitian transpose, respectively \footnote{As is conventional in the quantum physics literature (but not in information theory and signal processing), we use $(\cdot)^\dagger$ to denote the Hermitian transpose.}. For two matrices $\mA,\mB$ of the same size, $\innerprod{\mA}{\mB} = \trace(\mA^\dagger\mB)$ denotes the inner product between them.
$\|\mX\|$ and $\|\mX\|_F$ respectively represent the spectral norm and Frobenius norm of $\mX$.
$\sigma_{i}(\mX)$ is the $i$-th singular value of $\mX$.
For a vector $\va$ of size $N\times 1$, its $l_n$-norm is defined as $||\va||_n=(\sum_{m=1}^{N}|a_m|^n)^\frac{1}{n}$.
For two positive quantities $a,b\in \real$, the inequality $b\lesssim a$ or $b = O(a)$ means $b\leq c a$ for some universal constant $c$; likewise, $b\gtrsim a$ or $b = \Omega(a)$ represents $b\ge ca$ for some universal constant $c$.
To simplify notations in the following sections, for an order-$n$ TT format with ranks $(r_1,\dots, r_{n})$, we define $\ol r=\max_{i=1}^{n-1} r_i$.

\section{Informationally Complete POVM}
\label{sec: Stble embedding of density operator}

As described in the introduction, the population measurements $\{p_k\}$ of the density matrix $\vrho$ from a POVM $\{\mA_k\}$ are linear measurements that can be described through a linear map $\calA:  \C^{d^n\times d^n} \rightarrow \R^{K}$ of the form
\e
\calA(\vrho)= \begin{bmatrix}
          \< \mA_1, \vrho  \> \\
          \vdots \\
          \< \mA_{K}, \vrho  \>
        \end{bmatrix}.
\label{eq:linear map A}\ee
If a POVM consists of at least $d^{2n}$ matrices such that we can form exactly $d^{2n}$ linearly independent matrices by linear combination,  it is said to be an informationally complete POVM (IC-POVM) \cite{prugovevcki1977information,busch1989determination,peres1997quantum}. IC-POVMs consisting of exactly $d^{2n}$ elements are called minimal \cite{weigert2006simple}. In this section, we introduce some common IC-POVMs, including symmetric IC-POVM and spherical $t$-designs, and present properties of the corresponding linear mappings $\calA$.

\paragraph*{Symmetric Informationally Complete POVM (SIC-POVM)} \ \ A particularly attractive and potentially useful IC-POVM is one that is {\it symmetric}, i.e., it has equal pairwise inner products between any pair of the POVM elements~\cite{renes2004symmetric,flammia2006sic}. A POVM that is informationally complete, minimal, and symmetric is called SIC-POVM.

\begin{defi} (\cite{geng2020minimal})
\label{defi of SIC POVM}
A Symmetric Informationally Complete POVM (SIC-POVM) consists of $d^{2n}$ rank-one POVM elements $\{\mA_k\}$ that satisfies $(i)$ each with an equivalent trace $\trace(\mA_k) = \frac{1}{d^n}$ for all $k\in [d^{2n}]$, and $(ii)$ every pair of measurement operators satisfies $\<\mA_k, \mA_j \> = \begin{cases}
    \frac{1}{d^{2n}}, & k = j,\\
    \frac{1}{d^{2n}(d^n + 1)}, & k \neq j.
  \end{cases}$
\end{defi}

The set of vectors comprising an SIC-POVM has been extensively studied in various contexts under different names \cite{renes2004symmetric}, such as equiangular lines in linear algebra \cite{greaves2016equiangular} and equalangular tight frames in information theory \cite{sustik2007existence}. In quantum information, it has also found extensive applications, including quantum state tomography \cite{scott2006tight} and quantum cryptography \cite{durt2008wigner}.  SIC-POVMs have been numerically constructed for matrix dimensions ranging from $2$ to $151$~\cite{scott2010symmetric, samuel2024sic}, and for many higher dimensions, up to $39,604$ \cite{bengtsson2024sic}. While it is widely believed that SIC-POVMs exist in all dimensions, no explicit construction of the SIC-POVM is known for an arbitrary dimension of the Hilbert space. The following result establishes stable embeddings of any Hermitian matrices from SIC-POVM in terms of $\|\calA(\vrho)\|_2^2$.
\begin{lemma}
\label{l2 norm of SIC-POVM RIP}
Suppose that $\{\mA_k \}_{k=1}^{d^{2n}}$ form an SIC-POVM. Then for any  Hermitian matrix $\vrho\in\C^{d^n\times d^n}$, $\calA(\vrho)$ satisfies
\begin{eqnarray}
\label{The l2 norm of A(rho)}
\hspace{-0.6cm}\| \calA(\vrho)\|_2^2 = \sum_{k=1}^{d^{2n}}\< \mA_k, \vrho  \>^2 = \frac{\|\vrho\|_F^2 + (\trace(\vrho))^2}{d^n(d^n + 1)}.
\end{eqnarray}
\end{lemma}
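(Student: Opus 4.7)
The plan is to evaluate $\sum_{k=1}^{d^{2n}} \langle \mA_k, \vrho\rangle^2$ by viewing it as $\langle \vrho, \calF(\vrho)\rangle$, where $\calF$ is the frame super-operator $\calF(\mX) := \sum_k \langle \mA_k, \mX\rangle \mA_k$ acting on the real vector space of Hermitian matrices (note that $\langle \mA_k, \mX\rangle$ is real when both arguments are Hermitian, so the sum of squares is well defined). Writing $D := d^n$, I would first argue that $\calF$ takes the universal form $\calF(\mX) = \alpha\, \mX + \beta\,(\trace \mX)\mId$ for constants $\alpha,\beta$ depending only on $D$. The second step identifies $\alpha,\beta$, and the third step evaluates $\langle \vrho, \calF(\vrho)\rangle$.

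For the first step, informational completeness (together with the fact that $\{\mA_k\}$ are linearly independent in a minimal IC-POVM) ensures that $\{\mA_k\}$ spans the real space of Hermitian matrices, so it suffices to verify the claimed form on this spanning set and extend by linearity. Using \Cref{defi of SIC POVM} together with the POVM identity $\sum_k \mA_k = \mId$, a short computation yields
\[
\calF(\mA_j) \;=\; \tfrac{1}{D^2}\mA_j + \tfrac{1}{D^2(D+1)}\sum_{k \neq j}\mA_k \;=\; \tfrac{1}{D(D+1)}\mA_j + \tfrac{1}{D^2(D+1)}\mId.
\]
Matching this against $\alpha \mA_j + \beta \trace(\mA_j)\mId = \alpha\mA_j + (\beta/D)\mId$ gives $\alpha = \beta = \tfrac{1}{D(D+1)}$, and linearity (together with $\sum_j c_j \trace(\mA_j) = \trace(\vrho)$ for any expansion $\vrho = \sum_j c_j \mA_j$) upgrades this to
\[
\calF(\mX) \;=\; \tfrac{1}{D(D+1)}\bigl(\mX + (\trace \mX)\mId\bigr) \qquad \text{for every Hermitian } \mX.
\]

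Substituting $\mX = \vrho$ and taking the inner product with $\vrho$ then gives
\[
\sum_k \langle \mA_k, \vrho\rangle^2 \;=\; \langle \vrho, \calF(\vrho)\rangle \;=\; \frac{\|\vrho\|_F^2 + (\trace \vrho)^2}{D(D+1)},
\]
which is the claimed identity. I do not anticipate a serious obstacle: the only subtle point is justifying the linear form of $\calF$ from its action on the spanning set $\{\mA_j\}$, which is a routine two-coefficient match. As a fallback, one could instead expand $\vrho = \sum_j c_j \mA_j$ directly, invert the Gram matrix whose entries are fixed by the equiangular inner products in \Cref{defi of SIC POVM}, and compute both $\|\vrho\|_F^2$ and $\sum_k \langle \mA_k, \vrho\rangle^2$ in terms of $\sum_j c_j^2$ and $\bigl(\sum_j c_j\bigr)^2 = D^2 (\trace \vrho)^2$; this route is more computational but yields the same identity.
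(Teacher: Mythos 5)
Your proof is correct, and it reaches the identity by a route that is dual to, but organized differently from, the paper's. The paper expands $\vrho$ in the \emph{dual} basis $\widetilde{\mA}_k = d^n(d^n+1)\mA_k - \mId$ (quoted from the SIC-POVM literature, \cite{rastegin2014notes,gour2014construction}) and then evaluates $\trace(\vrho^2)$; you instead compute the frame super-operator $\calF(\mX)=\sum_k\<\mA_k,\mX\>\mA_k$ directly from Definition~\ref{defi of SIC POVM} and the completeness relation $\sum_k\mA_k=\mId$, obtaining $\calF(\mX)=\tfrac{1}{D(D+1)}\bigl(\mX+(\trace\mX)\mId\bigr)$ and reading off the quadratic form $\<\vrho,\calF(\vrho)\>$. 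These are two sides of the same coin --- your $\calF$ is exactly the inverse of the map defining the paper's dual basis --- but your version is self-contained: it needs no external citation, only the two axioms in the definition. Your extension-by-linearity step is sound (two linear maps agreeing on a spanning set agree everywhere), and your verification $\calF(\mA_j)=\tfrac{1}{D(D+1)}\mA_j+\tfrac{1}{D^2(D+1)}\mId$ checks out. One small point worth making explicit: the spanning property of $\{\mA_k\}$ does not need to be assumed separately, since condition $(ii)$ of Definition~\ref{defi of SIC POVM} forces the Gram matrix $\tfrac{1}{D(D+1)}\mId+\tfrac{1}{D^2(D+1)}\mathbf{1}\mathbf{1}^\top$ to be positive definite, so the $D^2$ elements are automatically linearly independent in the $D^2$-dimensional real space of Hermitian matrices.
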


The proof is provided in \Cref{Proof of RIP SIC POVM in Appe}. By using tools from compressive sensing literature~\cite{donoho2006compressed, candes2006robust, candes2008introduction, recht2010guaranteed, eftekhari2015new}, the previous work~\cite{qin2024quantum} provides a similar stable embedding result from Haar random projective measurement bases, but only for MPOs. In contrast, with POVM being informationally complete, \eqref{The l2 norm of A(rho)} holds for any Hermitian matrices without relying on any randomness in the measurement operator. This allows us to develop tight sample complexity bound in \Cref{sec:stable recovery} and iterative algorithms with guaranteed convergence in \Cref{sec: algorithm}. As an immediate consequence of \Cref{l2 norm of SIC-POVM RIP}, for any two density matrices $\vrho_1$ and $\vrho_2$, the difference $\vrho_1-\vrho_2$ remains Hermitian, and hence
\begin{eqnarray}
\label{The l2 norm of diff SIC POVM}
\| \calA(\vrho_1-\vrho_2)\|_2^2 & =& \frac{\|\vrho_1-\vrho_2\|_F^2 + (\trace(\vrho_1-\vrho_2))^2}{d^n(d^n + 1)}\nonumber\\
&=& \frac{\|\vrho_1-\vrho_2\|_F^2 }{d^n(d^n + 1)},
\end{eqnarray}
where we use the property that $\trace(\vrho_1) = \trace(\vrho_2) = 1$. Thus, SIC-POVM not only ensures distance measurements ($\calA(\vrho_1) \neq \calA(\vrho_2) $) as long as $\vrho_1\neq \vrho_2$, but also exactly preserves the distance of the two states in the measurement space, up to a universal scaling $1/d^n(d^n + 1)$.

\paragraph*{$t$-design POVMs} \ \  A more general set of IC-POVM are those induced by spherical $t$-designs.
\begin{defi}
\label{definition_of_T_Design} (Spherical $t$-designs \cite{matthews2009distinguishability,KuengACHA17}). A finite set $\{\vw_k  \}_{k=1}^K\subset \C^{d^n}$ of normalized vectors is called a  spherical quantum $t$-design if \footnote{ $K\geq C_{d^n + \lfloor t/2 \rfloor - 1}^{\lfloor t/2 \rfloor} \cdot C_{d^n + \lceil t/2 \rceil - 1}^{\lceil t/2 \rceil} $ is  necessary to form a spherical $t$-design\cite{scott2006tight,gross2015partial}.}
\begin{eqnarray} \label{the definition of t designs}
     \frac{1}{K}\sum_{k=1}^K (\vw_k\vw_k^\dagger)^{\otimes s}  = \int (\vw\vw^\dagger)^{\otimes s} d\vw
\end{eqnarray}
holds for any $s\leq t$, where the integral on the right hand side is taken with respect to the Haar measure on the
complex unit sphere in $\C^{d^n}$.
\end{defi}
When $s=1$, we have $\frac{1}{K}\sum_{k=1}^K \vw_k\vw_k^\dagger = \int \vw\vw^\dagger d\vw = \frac{1}{d^n}\mId$, and hence $\mA_k = \frac{d^n}{K} \vw_k\vw_k^\dagger, k = 1,\dots, K$ form a rank-one POVM. For simplicity, we call such an induced POVM $\{\mA_k = \frac{d^n}{K} \vw_k\vw_k^\dagger\}$ as a $t$-design POVM. SIC-POVM is a special instance of $2$-design POVMs. In \eqref{the definition of t designs}, we adopt the setting of uniform weights ($1/K$ for each $\vw_k$, which is mostly commonly used in practice) to simplify the analysis, but one can consider a more general scenario of varied weights \cite{dall2014accessible}. Unlike SIC-POVM, $t$-designs always exist and can, in principle, be constructed in any dimension and for any $t$ \cite{seymour1984averaging, bajnok1992construction},  although in some cases these constructions can be inefficient, as they require vector sets of exponential size \cite{hayashi2005reexamination}. Instead, we may use approximate $t$-designs which can be efficiently implemented in practice \cite{ambainis2007quantum}.
\begin{defi}
\label{definition_of approximate_T_Design} (Spherical $\delta$-approximate $t$-designs \cite{ambainis2007quantum}). A finite set $\{\vw_k  \}_{k=1}^K\subset \C^{d^n}$ of normalized vectors is called a spherical quantum $\delta$-approximate $t$-design if
\begin{eqnarray}
\label{the definition of approximate t designs2}
(1-\delta) \int (\vw\vw^\dagger)^{\otimes s} d\vw \leq \frac{1}{K}\sum_{k=1}^K (\vw_k\vw_k^\dagger)^{\otimes s}  \leq (1+\delta) \int (\vw\vw^\dagger)^{\otimes s} d\vw
\end{eqnarray}
holds for any $s\geq 2$, where the integral on the right hand side is taken with respect to the Haar measure on the
complex projective space. In addition, $\frac{1}{K}\sum_{k=1}^K \vw_k\vw_k^\dagger = \frac{1}{d^n}\mId$.
\end{defi}

The following result generalizes \Cref{l2 norm of SIC-POVM RIP} to $t$-design POVMs and approximate $t$-design POVMs.
\begin{lemma}
\label{l2 norm of approximate 2-designs RIP}
Suppose that $\{\vw_k  \}_{k=1}^K\subset \C^{d^n}$ forms an $\delta$-approximate $t$-design ($t\ge 2$). Let $\calA$ be the linear map in \eqref{eq:linear map A} corresponding to the induced POVM $\{\mA_k = \frac{d^n}{K} \vw_k\vw_k^\dagger\}$.
Then for arbitrary Hermitian matrix $\vrho\in\C^{d^n\times d^n}$, $\calA(\vrho)$ satisfies
\begin{eqnarray}
\label{The l2 norm of A(rho) approximate 2_designs}
(1-\delta) \frac{d^n(\|\vrho\|_F^2 + (\trace({\vrho}))^2)}{K(d^n + 1)} \leq \|\calA(\vrho) \|_2^2 \leq (1+\delta) \frac{d^n(\|\vrho\|_F^2 + (\trace({\vrho}))^2)}{K(d^n + 1)}.
\end{eqnarray}
\end{lemma}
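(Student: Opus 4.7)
The plan is to reduce $\|\calA(\vrho)\|_2^2$ to a trace of the empirical second-moment operator $\tfrac{1}{K}\sum_{k=1}^K(\vw_k\vw_k^\dagger)^{\otimes 2}$ against $\vrho^{\otimes 2}$, and then invoke the defining inequality of the $\delta$-approximate $t$-design at $s = 2$ (which is available since $t \geq 2$) together with the classical second-order Haar identity. Substituting $\mA_k = (d^n/K)\,\vw_k\vw_k^\dagger$ and applying the scalar identity $\trace(X)\trace(Y) = \trace(X\otimes Y)$ with $X = Y = \vw_k\vw_k^\dagger\vrho$ yields
\begin{equation*}
\|\calA(\vrho)\|_2^2 \;=\; \frac{d^{2n}}{K^2}\sum_{k=1}^K(\vw_k^\dagger\vrho\vw_k)^2 \;=\; \frac{d^{2n}}{K}\,\trace\!\left(\tfrac{1}{K}\sum_{k=1}^K(\vw_k\vw_k^\dagger)^{\otimes 2}\cdot\vrho^{\otimes 2}\right).
\end{equation*}

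Next, I would combine \Cref{definition_of approximate_T_Design} at $s = 2$ with the well-known second-order Haar integral
\begin{equation*}
\int(\vw\vw^\dagger)^{\otimes 2}\,d\vw \;=\; \frac{\mId + \mF}{d^n(d^n+1)},
\end{equation*}
where $\mF$ denotes the swap operator on $\C^{d^n}\otimes\C^{d^n}$. Pairing the resulting operator sandwich with $\vrho^{\otimes 2}$ and then using the two elementary identities $\trace(\vrho\otimes\vrho) = (\trace\vrho)^2$ and $\trace(\mF(\vrho\otimes\vrho)) = \trace(\vrho^2) = \|\vrho\|_F^2$ (where the latter requires $\vrho = \vrho^\dagger$), one recognizes
\begin{equation*}
\trace\!\left(\int(\vw\vw^\dagger)^{\otimes 2}\,d\vw\cdot\vrho^{\otimes 2}\right) \;=\; \frac{\|\vrho\|_F^2 + (\trace\vrho)^2}{d^n(d^n+1)}.
\end{equation*}
Multiplying through by the factor $d^{2n}/K$ then delivers the claimed two-sided bound. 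Specializing to $\delta = 0$ and $K = d^{2n}$ recovers \Cref{l2 norm of SIC-POVM RIP}, which provides a useful sanity check that all prefactors are correct.

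The main obstacle is transferring the operator-ordering bound on the empirical second-moment operator to a scalar inequality after pairing with $\vrho^{\otimes 2}$: this transfer is automatic only when the test operator is PSD, whereas $\vrho^{\otimes 2}$ is merely Hermitian for a general Hermitian $\vrho$. The cleanest way around this is to write the Jordan decomposition $\vrho = \vrho_+ - \vrho_-$ into positive and negative parts, so that
\begin{equation*}
\vrho^{\otimes 2} \;=\; \vrho_+^{\otimes 2} \;+\; \vrho_-^{\otimes 2} \;-\; \vrho_+\otimes\vrho_- \;-\; \vrho_-\otimes\vrho_+
\end{equation*}
splits into four PSD blocks with explicit signs, then apply the one-sided approximate-design inequality to each block in the orientation dictated by its sign and reassemble. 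Aside from this linear-algebraic bookkeeping around the Jordan decomposition, the rest of the argument is a direct computation via the Haar second-moment formula; no probabilistic or combinatorial machinery is required, since the approximate-design condition is being used as a deterministic operator bound.
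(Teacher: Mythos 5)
Your core computation---rewriting $\|\calA(\vrho)\|_2^2$ as $\frac{d^{2n}}{K}\trace\bigl(\frac{1}{K}\sum_{k}(\vw_k\vw_k^\dagger)^{\otimes 2}\,\vrho^{\otimes 2}\bigr)$ and evaluating the Haar side via $\int(\vw\vw^\dagger)^{\otimes 2}d\vw=\frac{\mId+\mS}{d^n(d^n+1)}$ together with $\trace(\vrho\otimes\vrho)=(\trace\vrho)^2$ and $\trace(\mS\vrho^{\otimes 2})=\|\vrho\|_F^2$---is exactly the paper's proof, and your prefactor check against \Cref{l2 norm of SIC-POVM RIP} is correct. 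You have also put your finger on a step the paper's proof passes over silently: the Loewner inequality in \Cref{definition_of approximate_T_Design} transfers to traces only against PSD test operators, and $\vrho^{\otimes 2}$ (even compressed to the symmetric subspace, e.g.\ for $\vrho=\diag(1,-1)$) need not be PSD when $\vrho$ is merely Hermitian---which is the relevant case, since the lemma is later applied to differences $\vrho_1-\vrho_2$.

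The difficulty is that your Jordan-decomposition repair does not recover the stated constants. Writing $E=\frac{1}{K}\sum_k(\vw_k\vw_k^\dagger)^{\otimes 2}$ and $I=\int(\vw\vw^\dagger)^{\otimes 2}d\vw$, applying the upper design bound to the blocks $\vrho_+^{\otimes 2},\vrho_-^{\otimes 2}$ and the lower bound to the subtracted blocks $\vrho_+\otimes\vrho_-,\ \vrho_-\otimes\vrho_+$ gives
\begin{equation*}
\trace(E\vrho^{\otimes 2})\ \le\ (1+\delta)\trace(I\vrho^{\otimes 2})+2\delta\bigl[\trace(I(\vrho_+\otimes\vrho_-))+\trace(I(\vrho_-\otimes\vrho_+))\bigr]
\ =\ (1+\delta)\trace(I\vrho^{\otimes 2})+\frac{4\delta\,\trace(\vrho_+)\trace(\vrho_-)}{d^n(d^n+1)},
\end{equation*}
because the cross terms come back with the coefficient $(1-\delta)$ instead of $(1+\delta)$ and therefore do not cancel (here $\trace(\mS(\vrho_+\otimes\vrho_-))=\trace(\vrho_+\vrho_-)=0$). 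The additive slack $4\delta\,\trace(\vrho_+)\trace(\vrho_-)/(d^n(d^n+1))\le\delta\|\vrho\|_1^2/(d^n(d^n+1))$ can exceed the main term by a factor of order $\delta d^n$, so your argument establishes the claimed two-sided bound only for semidefinite $\vrho$ or for $\delta=0$; for general Hermitian $\vrho$ it proves a strictly weaker inequality. Closing this gap requires either a stronger hypothesis on the deviation $E-I$ (for instance, a bound that pairs correctly with non-PSD test operators, at the cost of different constants involving $\|\vrho\|_1$), restricting to exact designs, or restating the lemma with the extra cross term; the reassembly step as you describe it does not deliver \eqref{The l2 norm of A(rho) approximate 2_designs}.
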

The proof is given in {Appendix}~\ref{Proof of RIP 2 designs in Appe}. Similar to \eqref{The l2 norm of diff SIC POVM}, it follows from \Cref{The l2 norm of A(rho) approximate 2_designs} that for any two density matrices $\vrho_1$ and $\vrho_2$, $\calA(\vrho_1-\vrho_2)$ satisfies
\begin{eqnarray}
\label{The l2 norm of diff approximate 2 designs POVM}
&&\| \calA(\vrho_1-\vrho_2)\|_2^2 \nonumber\\
&\geq& (1-\delta) \frac{d^n(\|\vrho_1-\vrho_2\|_F^2 + (\trace({\vrho_1-\vrho_2}))^2)}{K(d^n + 1)} \nonumber\\
&=& (1-\delta) \frac{d^n\|\vrho_1-\vrho_2\|_F^2 }{K(d^n + 1)}.
\end{eqnarray}

\section{Stable Recovery of Matrix Product Operators}
\label{sec:stable recovery}

Building upon the results that a (approximate) $t$-design POVM $\{ \mA_1,\dots, \mA_K \}\in\C^{d^n\times d^n}$ provides unique population measurements for any state, in this section, we investigate the stable recovery for MPOs of form \eqref{DefOfMPO}  from empirical measurements acquired through such POVMs. Specifically, for an MPO $\vrho^\star\in\C^{d^n \times d^n}$ with MPO ranks $(r_1,\dots, r_{n-1})$, we use the POVM to measure it $M$ times and take the average of the outcomes to generate empirical probabilities
\begin{eqnarray}
\wh p_{k} = \frac{f_{k}}{M}, \ k=1,\dots,K,
\label{eq:empirical-prob in SIC-POVM}
\end{eqnarray}
where $f_{k}$ denotes the number of times the $k$-th output is observed. We denote ${\widehat\vp}= \begin{bmatrix}
          \widehat{p}_{1} & \cdots &
          \widehat{p}_{K}
        \end{bmatrix}^\top$
as the empirical measurements obtained by the POVM, which are unbiased estimators of the population measurements $\vp = \begin{bmatrix} p_1 & \cdots & p_K\end{bmatrix}^\top= \calA(\vrho^\star)$, where $\calA$ is the induced linear measurement operator formally defined in \eqref{eq:linear map A}. To simplify the notation for the analysis, we  denote by $\veta = \begin{bmatrix}
          \eta_{1},
          \cdots,
          \eta_{K}\end{bmatrix}^\top = \wh{\vp} -  \vp$ the statistical error between the empirical probabilities and population probabilities.

With empirical measurements $\wh \vp$, we consider minimizing the following constrained least squares objective:
\begin{eqnarray}
    \label{The loss function in QST for SIC-POVM}
    \wh{\vrho} = \argmin_{\vrho\in\setX_{\vr}}\|\calA(\vrho) - {\widehat\vp}\|_2^2,
\end{eqnarray}
where we denote by $\setX_{\vr}$ the set of MPOs with MPO  ranks $\vr = (r_1,\dots, r_{n-1})$:
\begin{eqnarray}
\label{SetOfMPO}
\begin{split}
&\setX_{\vr}= \Big\{ \vrho\in\C^{d^n\times d^n}:\ \vrho = \vrho^\dagger, \trace(\vrho) = 1, \\ &\vrho(i_1 \cdots i_\nqbit, j_1\cdots j_\nqbit) = \mX_1^{i_1,j_1} \mX_2^{i_2,j_2} \cdots \mX_\nqbit^{i_\nqbit,j_\nqbit}, \\
& \mX_\ell^{i_\ell,j_\ell}\in\C^{r_{\ell-1}\times r_\ell}, \ell=1,\dots,n-1, r_0=r_n=1\Big\}.
\end{split}
\end{eqnarray}
It is worth noting that one could consider imposing additional structure on the factors ${\mX_{\ell}^{i_\ell,j_\ell}}$ to ensure that $\vrho$ is PSD, as done in~\cite[eq. (3)]{verstraete2004matrix}. However, the condition in \cite[eq. (3)]{verstraete2004matrix} is only sufficient, not necessary, for ensuring $\vrho$ is PSD and adding PSD constraint does not significantly reduce the number of degrees of freedom of elements in the set $\setX_{\vr}$. Hence, we focus on the set \eqref{SetOfMPO}, which encompasses not only PSD matrices but also non-PSD matrices.

Assuming a global solution to \eqref{The loss function in QST for SIC-POVM} can be found, our objective is to study the performance of $\wh\vrho$, and examine how the recovery error $\|\wh{\vrho}  -  \vrho^\star\|_F$ scales with the size of the MPO (specifically in relation to the number of qudits $n$) and the number of measurements $M$. In order to achieve a stable estimation of the state with a polynomial number of measurements relative to $n$, we expect the recovery error exhibits polynomial growth rather than exponential growth with respect to $n$. Next section will provide iterative algorithm to solve \eqref{The loss function in QST for SIC-POVM}.

\paragraph*{Main results} \ \ We now provide a formal analysis of the recovery error $\|\wh{\vrho}  -  \vrho^\star\|_F$. Since both SIC-POVM and exact $t$-design POVM can be viewed as specific instances of $\delta$-approximate $t$-design POVMs, the analysis will primarily focus on $\delta$-approximate $t$-design POVM. Using the fact that $\wh{\vrho}$ is a global solution of \eqref{The loss function in QST for SIC-POVM} and the fact that $\vrho^\star \in \setX_{\bar{r}}$, we have
\begin{eqnarray}
    \label{whrho and rho star relationship}
    0 & \!\!\!\!\leq \!\!\!\!& \|\calA(\vrho^\star) - \wh{\vp} \|_2^2  - \|\calA(\wh{\vrho}) - \wh{\vp} \|_2^2\nonumber\\
    &\!\!\!\! = \!\!\!\!&\|\calA(\vrho^\star)-\calA(\vrho^\star) - \veta\|_2^2 - \|\calA(\wh{\vrho})-\calA(\vrho^\star) - \veta\|_2^2\nonumber\\
    &\!\!\!\! = \!\!\!\!& 2\<\calA(\vrho^\star)+\veta, \calA(\wh{\vrho} - \vrho^\star)  \> + \|\calA(\vrho^\star)\|_2^2 - \|\calA(\wh{\vrho})\|_2^2\nonumber\\
    &\!\!\!\! = \!\!\!\!& 2\<  \veta, \calA(\wh{\vrho} - \vrho^\star) \> - \|\calA(\wh{\vrho} - \vrho^\star)\|_2^2,
\end{eqnarray}
which further implies that
\begin{eqnarray}
    \label{whrho and rho^star relationship_1}
    \|\calA(\wh{\vrho} - \vrho^\star)\|_2^2 \leq 2\<  \veta, \calA(\wh{\vrho} - \vrho^\star) \>.
\end{eqnarray}

According to \eqref{The l2 norm of diff approximate 2 designs POVM} for $\delta$-approximate $t$-design POVM, the left-hand side of the above equation can be further lower bounded by
\begin{eqnarray}
    \label{summary of lower bound of difference}
    \|\calA(\wh{\vrho} - \vrho^\star)\|_2^2 \geq (1-\delta) \frac{d^n\|\wh{\vrho} - \vrho^\star\|_F^2 }{K(d^n + 1)}.
\end{eqnarray}

The challenge part is to deal with the right-hand side of \eqref{whrho and rho^star relationship_1}. A straightforward application of the Cauchy–Schwarz inequality $\<  \veta, \calA(\wh{\vrho} - \vrho^\star) \> \le \|\veta\|_2 \cdot \|\calA(\wh{\vrho} - \vrho^\star)\|_2$ is insufficient for providing a tight result as $\|\veta\|_2$ scales as $\frac{1}{\sqrt{M}}$ according to \cite[eq. (35)]{qin2024quantum}. Instead, we leverage the randomness of $\veta$ and employ the following concentration bound for multinomial random variables.
\begin{lemma}(\cite[Lemma 3]{qin2024quantum})
\label{General bound of multinomial distribution Q cases1}
Suppose $(f_{1},\dots, f_{K})$ are mutually independent and follow the multinomial distribution $\operatorname{Multinomial}(M,\vp)$  where $\sum_{k=1}^{K}f_{i} =M $ and $\vp = [p_{1},\cdots, p_{K}]$.
Let $a_{1},\dots, a_{K}$ be fixed. Then, for any $t>0$,
\begin{eqnarray}
    \label{General bound of multinomial distribution for all constant Q cases1}
    \P{\sum_{k=1}^Ka_{k}(\frac{f_{k}}{M} - p_{k}) > t   } \leq  e^{-\frac{Mt}{4a_{\max}}\min\bigg\{1, \frac{a_{\max}t }{4\sum_{k=1}^K a_{k}^2p_{k}} \bigg\}} + e^{-\frac{Mt^2 }{8\sum_{k=1}^K a_{k}^2p_{k}}},
\end{eqnarray}
where $a_{\max} = \max_{k}|a_{k}|$.
\end{lemma}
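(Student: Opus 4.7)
The plan is to reduce the multinomial $(f_1,\ldots,f_K)$ to a sum of iid categorical random variables and then apply a Cram\'er--Chernoff (Bernstein-type) argument. Concretely, I introduce $X_1,\ldots,X_M$ iid on $[K]$ with $\P{X_i=k}=p_k$; then $(f_1,\ldots,f_K)$ has the same law as $(\#\{i: X_i=1\},\ldots,\#\{i:X_i=K\})$, and setting $\mu := \sum_k a_k p_k$ and $Z_i := a_{X_i}-\mu$ gives
\begin{equation*}
\sum_{k=1}^K a_k\!\left(\tfrac{f_k}{M}-p_k\right) \;=\; \tfrac{1}{M}\sum_{i=1}^M Z_i .
\end{equation*}
The $Z_i$ are iid, centered, bounded ($|Z_i|\leq 2a_{\max}$), and have variance $\E Z_i^2 \leq \sigma^2 := \sum_k a_k^2 p_k$. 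This reduction side-steps the correlations induced by the constraint $\sum_k f_k = M$ and replaces the multinomial tail with a standard iid tail.

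Next, I apply the Cram\'er--Chernoff inequality $\P{\sum_i Z_i > Mt} \leq \inf_{\lambda>0} e^{-\lambda Mt}(\E e^{\lambda Z_1})^M$ together with two complementary MGF estimates. The first is a Bennett/Bernstein-type bound of the form $\E e^{\lambda Z_1}\leq \exp\!\bigl(\lambda^2\sigma^2/(2(1-c\lambda a_{\max}))\bigr)$ for $\lambda < 1/(c\,a_{\max})$, whose optimization yields the classical Bernstein tail $\exp\!\bigl(-Mt^2/(c_1 \sigma^2 + c_2 a_{\max} t)\bigr)$. Invoking the elementary inequality $\tfrac{1}{a+b}\geq \tfrac{1}{2\max(a,b)}$ then splits this tail into the two regimes $\exp(-c Mt^2/\sigma^2)$ and $\exp(-c Mt/a_{\max})$, which compactly combine into the $\min$-expression in the first summand of \eqref{General bound of multinomial distribution for all constant Q cases1}. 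The second estimate is a variance-only (Hoeffding-type) expansion $\E e^{\lambda Z_1}\leq \exp(\lambda^2\sigma^2/2)$ valid in the small-$\lambda$ regime, whose Chernoff optimization produces the pure subgaussian contribution $\exp(-Mt^2/(8\sigma^2))$ that appears as the second summand. Summing (or taking a max with a factor-of-two union bound) of these two tail estimates yields the stated inequality.

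The main obstacle will be tracking the precise numerical constants ($1/4$, $1/8$, and the factor of $4$ inside the $\min$) so that they exactly match the statement --- these are sensitive to the specific version of the MGF inequality used and to how aggressively one simplifies the $\min$-split after Bernstein. A secondary subtlety is that keeping the sharp variance bound $\E Z_i^2 \leq \sigma^2$ (rather than the coarser $4a_{\max}^2$ implied by boundedness alone) is essential: replacing $\sigma^2$ by $a_{\max}^2$ would inflate the effective noise by a $d^n$-type factor when $a_k$ is taken later to be an inner product with the POVM elements, destroying the sample-optimal scaling used in \Cref{sec:stable recovery}.
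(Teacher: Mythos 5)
A preliminary remark: this lemma is imported verbatim by citation from \cite[Lemma 3]{qin2024quantum}, and the present paper contains no proof of it, so there is nothing in-paper to compare your argument against line by line. Judged on its own terms, your proposal is correct, and the reduction you choose is the natural one. Writing the multinomial as the empirical counts of $M$ i.i.d.\ categorical variables $X_i$ and setting $Z_i = a_{X_i}-\sum_k a_k p_k$ is exactly right (and it quietly repairs the statement's self-contradictory phrasing that the $f_k$ are ``mutually independent'' yet sum to $M$): one has $|Z_i|\le 2a_{\max}$ and $\E[ Z_i^2] \le \sigma^2 := \sum_k a_k^2 p_k$, and a single application of Bernstein's inequality already gives
\begin{equation*}
\P{\tfrac{1}{M}\textstyle\sum_{i=1}^M Z_i > t} \;\le\; \exp\Big(-\tfrac{Mt^2/2}{\sigma^2 + 2a_{\max}t/3}\Big) \;\le\; \exp\Big(-\min\big\{\tfrac{Mt^2}{4\sigma^2},\, \tfrac{3Mt}{8a_{\max}}\big\}\Big) \;\le\; \exp\Big(-\tfrac{Mt}{4a_{\max}}\min\big\{1, \tfrac{a_{\max}t}{4\sigma^2}\big\}\Big),
\end{equation*}
which is the \emph{first} summand of \eqref{General bound of multinomial distribution for all constant Q cases1} on its own; since the second summand is nonnegative, the lemma follows, and the constant-tracking you flag as the main obstacle is in fact not delicate. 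The one imprecision is your account of the second summand: the ``variance-only'' bound $\E [e^{\lambda Z_1}]\le e^{\lambda^2\sigma^2/2}$ holds only for $\lambda \lesssim 1/a_{\max}$, so the Chernoff optimization producing $e^{-Mt^2/(8\sigma^2)}$ is legitimate only when $t\lesssim \sigma^2/a_{\max}$; in the large-deviation regime that term is \emph{not} by itself a valid tail bound (a sparse Bernoulli example with $p_1$ small and $t$ of order one violates it). This does no damage, because the Bernstein term already dominates the claim, but you should present the second exponential as gratuitous additive slack rather than as an independently derived bound. Your closing observation---that one must keep the variance proxy $\sum_k a_k^2 p_k$ rather than coarsen to $a_{\max}^2$---is indeed the entire point of the lemma as it is used downstream, where it produces the $\gamma(\vrho^\star)$ dependence in \Cref{Recovery error of various quantum POVM}.
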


We proceed by considering the following two cases, $2$-designs and $t$-designs with $t\ge 3$.

\paragraph*{Case $(i)$: $2$-designs.} \ \
There are two potential technical difficulties in directly applying \Cref{General bound of multinomial distribution for all constant Q cases1} to bound the term $\<  \veta, \calA(\wh{\vrho} - \vrho^\star) \>$ by plugging in $a_k = \<\mA_k, \wh\vrho - \vrho^\star \>$: (1) $a_k$ in not independent to the random variables $\{f_k\}$ since $\wh \vrho$ depends on the empirical measurements, and (2) we need to upper bound a term of form $\sum_{k = 1}^{K}\<\mA_k, \vrho - \vrho^\star \>^2p_{k}$ originating from the exponential term in \eqref{General bound of multinomial distribution for all constant Q cases1}. We will address the first challenge by using a covering argument. For the second challenge, to utilize the upper bound for $\|\calA(\vrho - \vrho^\star)\|_2^2 = \sum_{k = 1}^{K}\<\mA_k, \vrho - \vrho^\star \>^2$ in Section~\ref{sec: Stble embedding of density operator}, we capture the largest probabilities by
\begin{eqnarray}
    \label{requirement in _global SIC-POVM main paper}
    \gamma(\vrho^\star) := K \cdot \max_k p_k =  K \cdot \max_{k} \langle \mA_k, \vrho^\star \rangle.
\end{eqnarray}
Since $\max_k p_k \ge \frac{1}{K}$, $\gamma(\vrho^\star) \ge 1$ always holds. When the probability distribution is uniform, $\gamma(\vrho^\star)$ can be small as $O(1)$; particularly, $\gamma(\vrho^\star) = 1$ when $p_1 = \cdots = p_K = 1/K$. On the other hand, if the probability distribution is spiky, $\gamma(\vrho^\star)$ could approach $K$. We now summarize the main result as follows.
\begin{theorem}[Stable recovery with $2$-designs]
\label{Recovery error of various quantum POVM}
Suppose $\{\mA_1, \ldots, \mA_{K}\}$ form a set of  $\delta$-approximate $2$-design POVMs.  Consider an MPO state $\vrho^\star\in\C^{d^n \times d^n}$ with MPO ranks $(r_1,\dots, r_{n-1})$. Then, with the empirical measurements $\wh \vp$ obtained by measuring the state $M$ times using the POVM and probability $1 - e^{- \Omega( n d^2 \ol{r}^2\log n)}$, the solution $\wh \vrho$ of the constrained least squares \eqref{The loss function in QST for SIC-POVM} satisfies
\begin{eqnarray}
\label{final conclusion of recovery error1}
\|\wh{\vrho} - \vrho^\star\|_F\lesssim \frac{ d\ol{r}\sqrt{(1+\delta)n  \gamma(\vrho^\star) \log n} }{(1-\delta)\sqrt{ M}}  .
\end{eqnarray}
\end{theorem}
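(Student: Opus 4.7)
\textbf{Proof plan for \Cref{Recovery error of various quantum POVM}.} The excerpt has already reduced the problem to combining the lower bound
$\|\calA(\wh\vrho-\vrho^\star)\|_2^2 \geq (1-\delta)\frac{d^n}{K(d^n+1)}\|\wh\vrho-\vrho^\star\|_F^2$
with the noise cross-term $\langle \veta, \calA(\wh\vrho-\vrho^\star)\rangle$. The plan is therefore to control
$Z := \sup_{\vrho\in\setX_{\vr},\,\vrho\neq\vrho^\star}\frac{\langle \veta,\calA(\vrho-\vrho^\star)\rangle}{\|\vrho-\vrho^\star\|_F}$,
so that \eqref{whrho and rho^star relationship_1} and \eqref{summary of lower bound of difference} yield
$\|\wh\vrho-\vrho^\star\|_F \leq \frac{2K(d^n+1)}{(1-\delta)d^n}\,Z$, after which plugging the high-probability bound on $Z$ immediately gives \eqref{final conclusion of recovery error1}.

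To bound $Z$, I would first observe that for any $\vrho\in\setX_{\vr}$ the difference $\vrho-\vrho^\star$ is Hermitian, trace-zero, and admits an MPO representation with bond dimensions at most $2\vr$ (concatenate the two TT cores at every site). Normalizing by the Frobenius norm, the candidates $\vu = (\vrho-\vrho^\star)/\|\vrho-\vrho^\star\|_F$ lie in the compact set
\begin{equation*}
\calU := \{\vu\in\C^{d^n\times d^n}:\, \vu=\vu^\dagger,\ \trace(\vu)=0,\ \|\vu\|_F=1,\ \vu \text{ is an MPO with bond dim.\ }\leq 2\ol r\}.
\end{equation*}
Since $\calU$ is parameterized (via a left-canonical TT) by $O(nd^2\ol r^2)$ complex numbers constrained to unit Frobenius norm, a standard volumetric argument produces an $\epsilon$-net $\calN_\epsilon\subset\calU$ with $|\calN_\epsilon|\leq (C/\epsilon)^{cnd^2\ol r^2}$ for absolute constants $c,C$. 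I would take $\epsilon$ of constant order (e.g.\ $1/2$) so that any $\vu\in\calU$ admits $\vu=\vu_0+\vu_1$ with $\vu_0\in\calN_\epsilon$ and $\|\vu_1\|_F\leq\epsilon$; applying the decomposition iteratively to $\vu_1$ allows $Z\leq \frac{1}{1-\epsilon}\sup_{\vu_0\in\calN_\epsilon}\langle\veta,\calA(\vu_0)\rangle$ after noting that $\vu_1/\|\vu_1\|_F$ still lies in (a constant-scaled version of) $\calU$.

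Next, for a fixed $\vu\in\calN_\epsilon$, set $a_k=\langle\mA_k,\vu\rangle$ and apply \Cref{General bound of multinomial distribution Q cases1}. Two moment parameters need to be bounded: for the variance proxy, since $\max_k p_k\leq \gamma(\vrho^\star)/K$, I can write
\begin{equation*}
\sum_{k=1}^K a_k^2 p_k \;\leq\; \frac{\gamma(\vrho^\star)}{K}\,\|\calA(\vu)\|_2^2 \;\leq\; \frac{(1+\delta)\gamma(\vrho^\star)\,d^n}{K^2(d^n+1)},
\end{equation*}
where the second step uses the upper $\delta$-approximate $2$-design bound in \Cref{l2 norm of approximate 2-designs RIP} together with $\trace(\vu)=0$ and $\|\vu\|_F=1$. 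For the $\ell_\infty$ parameter, the rank-one form $\mA_k=\tfrac{d^n}{K}\vw_k\vw_k^\dagger$ gives $a_{\max}\leq \tfrac{d^n}{K}\|\vu\|\leq \tfrac{d^n}{K}$. Choosing $t=C\sqrt{\tfrac{(1+\delta)nd^2\ol r^2\gamma(\vrho^\star)\log n}{MK^2}}\cdot\frac{\sqrt{d^n}}{\sqrt{d^n+1}}$ with a sufficiently large constant $C$ ensures $\tfrac{Mt^2}{\sum a_k^2 p_k}\gtrsim nd^2\ol r^2\log n$, so the sub-Gaussian exponent in \eqref{General bound of multinomial distribution for all constant Q cases1} dominates the log-cardinality of $\calN_\epsilon$. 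The Bernstein-type exponent is handled similarly and is non-binding in the relevant regime $M\gtrsim nd^2\ol r^2\log n\cdot d^n/\gamma(\vrho^\star)$; outside that regime the Frobenius bound is trivial.

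Finally, a union bound over $\calN_\epsilon$ (and the same argument applied to $-\vu$) yields $Z\lesssim \sqrt{\tfrac{(1+\delta)nd^2\ol r^2\gamma(\vrho^\star)\log n}{MK^2}}\cdot\sqrt{\tfrac{d^n}{d^n+1}}$ with probability at least $1-e^{-\Omega(nd^2\ol r^2\log n)}$. Plugging this into $\|\wh\vrho-\vrho^\star\|_F\leq \tfrac{2K(d^n+1)}{(1-\delta)d^n}Z$ collapses all $K$ and $d^n$ factors and produces \eqref{final conclusion of recovery error1}. The main obstacle I expect is the covering step: the set of MPOs with trace one is not bounded a priori in Frobenius norm, so the argument has to be phrased on the \emph{normalized} differences (not on $\setX_{\vr}$ itself), and the TT parameterization must be placed in a canonical form with site-wise norm control so that a volumetric net of size $(C/\epsilon)^{cnd^2\ol r^2}$ actually covers $\calU$; once this is done, the remaining concentration and union-bound steps are routine.
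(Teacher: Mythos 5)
Your proposal follows essentially the same route as the paper's proof: reduce to bounding $\sup_{\vu}\langle\veta,\calA(\vu)\rangle$ over normalized, trace-zero MPO differences with doubled bond dimension, cover that set with an $\epsilon$-net of size $e^{O(nd^2\ol r^2\log n)}$, apply \Cref{General bound of multinomial distribution Q cases1} pointwise with the variance proxy $\sum_k a_k^2 p_k\leq\frac{(1+\delta)\gamma(\vrho^\star)d^n}{K^2(d^n+1)}$ from \Cref{l2 norm of approximate 2-designs RIP} and $\max_k p_k=\gamma(\vrho^\star)/K$, union bound, and combine with \eqref{whrho and rho^star relationship_1} and \eqref{summary of lower bound of difference}. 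The only cosmetic difference is that you treat the Bernstein branch of the concentration bound explicitly (via $a_{\max}\leq d^n/K$) where the paper dismisses it as non-binding without loss of generality; your handling is, if anything, slightly more careful.
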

The proof is given in {Appendix} \ref{Proof of recovery error of SIC-POVM in Appe}. \Cref{Recovery error of various quantum POVM} guarantees a stable recovery of the ground-truth state with a $\delta$-approximate $2$-design POVM when the number of state copies $M = O(n d^2 \gamma(\vrho^\star) \ol{r}^2 \log n/\epsilon^2)$. This condition highlights that the growth of $M$ is only polynomial in terms of the number of qudits $n$ when $\gamma(\vrho^\star)$ is small. In particular, when $\gamma(\vrho^\star) = O(1)$, we can achieve an efficient theoretical bound for $M$ that matches $O(nd^2\ol{r}^2)$, the number of degrees of freedom of MPOs. On the other hand, when $\gamma(\vrho^\star)$ is polynomial in terms of $n$, the order of $M$ still remains polynomial.

As mentioned above, $\gamma(\vrho)\ge 1$ always holds. To further illustrate possible values for $\gamma(\vrho)$, we consider the case of SIC-POVM. First consider an identity density matrix $\vrho = \frac{1}{d^n}\mId_{d^n} = \frac{1}{d}\mId_d\otimes \cdots \otimes \frac{1}{d}\mId_d$ which can be written as an MPO with bond dimension $1$. For this case,  $\<\mA_k, \vrho \> = \frac{1}{d^n}\trace(\mA_k) = \frac{1}{d^{2n}}$ for all $k$, and hence $\gamma(\vrho) = 1$. On the other hand, when $\vrho = d^n \mA_k$ for some $k$ (note that such a rank-one density matrix may not be an MPO), we have $\<\mA_k, \vrho \> = \frac{1}{d^n}$ and $\<\mA_j, \vrho \> = \frac{1}{d^n(d^n+1)}$ for all $j\neq k$, and hence $\gamma(\vrho) = d^n$. Indeed, we can prove that $\gamma(\vrho) \le d^n$ always holds for any state $\vrho$ since $\<\mA_k, \vrho \> \le \|\vrho\| \trace(\mA_k)\le \trace(\mA_k) = \frac{1}{d^n}$ for all $k$. However, we note that the above pathological case with spiky probability distribution may happen rarely in practice. For instance, consider a random density matrix $\vrho = \vu {\vu}^\dagger$, where\footnote{Similar result also holds when $\vu$ is randomly generated from the unit sphere according to the Haar measure.} each entry in $\vu \in\C^{d^n}$ in generated i.i.d. from a complex normal distribution $\calC\calN(0, \frac{1}{d^{n}})$. In this case, each term $\< \mA_k, \vu {\vu}^\dagger \>$ has mean $\E[\< \mA_k, \vu {\vu}^\dagger \>] = \frac{\trace(\mA_k)}{d^n} = \frac{1}{d^{2n}}$. Moreover, since $\< \mA_k, \vu {\vu}^\dagger \>$ is a second-order polynomial in the entries of Gaussian random vector, it is a subexponential random variable \cite[Proposition 2.4]{zajkowski2020bounds} with subexponential norm \footnote{For a random variable $X$, its subgaussian norm and subexponential norm are respectively defined as $ \|X\|_{\psi_2} = \inf \{t>0, \ \E e^{\frac{|X|^2}{t^2}} \leq 2  \}$ and $\|X\|_{\psi_1} = \inf \{t>0, \ \E e^{\frac{|X|}{t}} \leq 2  \}$.} $\|\< \mA_k, \vu {\vu}^\dagger \> - \frac{1}{d^{2n}}\|_{\psi_1}\leq \frac{c}{d^n} \|\sqrt{d^n}\vu \|_{\psi_2}^2\|\mA_k\|_F + \frac{1}{d^{2n}}\lesssim \frac{1}{d^{2n}}$ for some constant $c$.
We can then invoke
concentration inequality for subexponential random variable \cite{vershynin2018high} to obtain that $\max_{k} \<\mA_k, \vrho^\star \> \lesssim \frac{n }{d^{2n}}$ holds with probability at least $1 - e^{-Cn}$ for some positive constant $C$. In other words, for such a random density matrix $\vrho$, with high probability the probability distribution under SIC-POVM is quite uniform as $\gamma(\vrho) \lesssim n$.

We note that the dependence of $\gamma(\vrho^\star)$ may arise from technical difficulties, and could be eliminated by more sophisticated analysis, which is left as future work. Instead, we exploit additional properties in the $3$-design to improve the bound.

\paragraph*{Case $(ii)$: $t$-designs with $t\geq 3$} \ \ To provide a uniform bound for all MPOs without the dependence on $\gamma(\vrho^\star)$, we can consider $3$-designs, utilizing the property that uniform sampling from a $3$-design mimics the first six moments of sampling uniformly according to the Haar measure. The following result establishes an improved guarantee for estimating MPO with $\delta$-approximate $t$-designs ($t\geq 3$).
\begin{theorem}[Stable recovery with $t$-designs ($t\geq 3$)]
Suppose $\{\mA_1, \ldots, \mA_{K}\}$ form a set of $\delta$-approximate $t$-design POVMs ($t\geq 3$) and $\vrho^\star\in\C^{d^n \times d^n}$ is an MPO state with MPO  ranks $(r_1,\dots, r_{n-1})$.  If we measure the state $M$ times using the POVM, then with probability $1 - e^{- \Omega( n d^2 \ol{r}^2\log n) }$, the solution $\wh \vrho$ of the constrained least squares \eqref{The loss function in QST for SIC-POVM} satisfies
\begin{eqnarray}
\label{final conclusion of recovery error1 for t-design larger 3}
\|\wh{\vrho} - \vrho^\star\|_F\lesssim \frac{ d\ol{r}\sqrt{(1+\delta)n  \log n} }{(1-\delta)\sqrt{ M}}.
\end{eqnarray}
\label{Recovery error of various t-designs POVM 3}
\end{theorem}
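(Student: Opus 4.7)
The plan is to follow the same framework used for \Cref{Recovery error of various quantum POVM}, improving only the bound on the variance proxy $\sum_{k}a_{k}^{2}p_{k}$ by exploiting the third moment available from a $t$-design with $t\ge 3$. Setting $\mX=\wh\vrho-\vrho^{\star}$ and combining \eqref{whrho and rho^star relationship_1} with the lower bound \eqref{summary of lower bound of difference} gives
\begin{equation*}
(1-\delta)\frac{d^{n}\|\mX\|_{F}^{2}}{K(d^{n}+1)}\;\le\;\|\calA(\mX)\|_{2}^{2}\;\le\; 2\sum_{k=1}^{K}a_{k}\Big(\frac{f_{k}}{M}-p_{k}\Big),
\end{equation*}
with $a_{k}=\langle\mA_{k},\mX\rangle$ and $p_{k}=\langle\mA_{k},\vrho^{\star}\rangle$; the remaining task is an upper bound on the random sum that is uniform over admissible $\mX$.

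The key new step is to bound $\sigma^{2}:=\sum_{k}a_{k}^{2}p_{k}$ tightly. Writing $\mA_{k}=\tfrac{d^{n}}{K}\vw_{k}\vw_{k}^{\dagger}$, I would rearrange
\begin{equation*}
\sigma^{2}\;=\;\frac{d^{3n}}{K^{2}}\cdot \frac{1}{K}\sum_{k=1}^{K}\trace\bigl[(\mX\otimes\mX\otimes\vrho^{\star})(\vw_{k}\vw_{k}^{\dagger})^{\otimes 3}\bigr],
\end{equation*}
and invoke the $\delta$-approximate $t$-design condition at order $s=3\le t$ together with the Haar identity $\int(\vw\vw^{\dagger})^{\otimes 3}\,d\vw=\tfrac{1}{d^{n}(d^{n}+1)(d^{n}+2)}\sum_{\pi\in S_{3}}P_{\pi}$ to replace $\tfrac{1}{K}\sum_{k}(\vw_{k}\vw_{k}^{\dagger})^{\otimes 3}$ by its Haar counterpart at the cost of a $(1+\delta)$ factor. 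Evaluating the six permutation traces on $(\mX,\mX,\vrho^{\star})$ and using $\trace(\mX)=0$ and $\trace(\vrho^{\star})=1$, only three of the six survive, leaving $\|\mX\|_{F}^{2}+2\trace(\mX^{2}\vrho^{\star})$. Bounding $\trace(\mX^{2}\vrho^{\star})\le\|\vrho^{\star}\|\,\|\mX\|_{F}^{2}\le\|\mX\|_{F}^{2}$ via $0\preceq\vrho^{\star}\preceq\mId$ then yields $\sigma^{2}\lesssim (1+\delta)\|\mX\|_{F}^{2}/K^{2}$, independent of the measurement probabilities $\{p_{k}\}$; this is precisely what eliminates the factor $\gamma(\vrho^{\star})$ appearing in \Cref{Recovery error of various quantum POVM}.

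With this variance bound, the remainder proceeds in parallel to Case~(i): apply \Cref{General bound of multinomial distribution Q cases1} with $t\asymp \|\mX\|_{F}\sqrt{nd^{2}\ol r^{2}\log n/M}/K$ so that both exponents in \eqref{General bound of multinomial distribution for all constant Q cases1} are at least $\Omega(nd^{2}\ol r^{2}\log n)$, then promote the pointwise concentration to a uniform bound via a covering argument on the set of MPO differences of bond dimension at most $2\ol r$, whose log covering number at scale $\epsilon\asymp\|\mX\|_{F}$ is $O(nd^{2}\ol r^{2}\log n)$ as established in \cite{qin2024quantum}. Cancelling a factor of $\|\mX\|_{F}$ from the two sides of the resulting inequality gives the stated rate.

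The principal obstacle is the careful bookkeeping in the third-moment calculation under the $\delta$-approximate (rather than exact) design hypothesis: since $\mX\otimes\mX\otimes\vrho^{\star}$ is Hermitian but in general not PSD, the operator inequality $(1-\delta)\int\le\tfrac{1}{K}\sum\le(1+\delta)\int$ does not immediately yield a clean $(1\pm\delta)$ trace bound. One must decompose $\mX$ into its positive and negative parts and track the $(1\pm\delta)$ factors over each of the four sign combinations and six permutations so that the final $(1+\delta)$ prefactor in $\sigma^{2}$ is preserved. A secondary technical point is controlling the Bernstein tail contribution $a_{\max}\log N/M$ in \eqref{General bound of multinomial distribution for all constant Q cases1}; bounding $|a_{k}|\le\tfrac{d^{n}}{K}\|\mX\|$ and reusing the argument from \cite{qin2024quantum} should verify that this term is dominated by the subgaussian contribution $\sigma\sqrt{\log N/M}$ in the regime $M\gtrsim nd^{2}\ol r^{2}\log n$ of interest.
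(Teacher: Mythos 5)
Your proposal follows essentially the same route as the paper's proof: the third moment of the $\delta$-approximate design reduces $\sum_k a_k^2 p_k$ to $O\bigl((1+\delta)(\|\mX\|_F^2+\trace(\mX^2\vrho^\star))/K^2\bigr)$ via the Haar formula for $\int(\vw\vw^\dagger)^{\otimes 3}\d{\vw}$ and $\trace(\mX)=0$, the cross term is absorbed by $\trace(\mX^2\vrho^\star)\le\|\mX\|_F^2$ (the paper uses the PSD trace inequality $\trace(\mA\mB)\le\trace(\mA)\trace(\mB)$ where you use $\|\vrho^\star\|\le 1$, an equivalent step), and the remainder is the Case~(i) covering-plus-multinomial-concentration argument verbatim. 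The one point where you are more careful than the paper is in noting that $\mX\otimes\mX\otimes\vrho^\star$ is not PSD, so the operator inequality in \Cref{definition_of approximate_T_Design} does not transfer to traces without an argument such as your positive/negative-part decomposition --- a genuine subtlety that the paper's proof of \eqref{3 terms in the 3 designs} silently elides.
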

The proof is given in {Appendix} \ref{Proof of recovery error of t-designs POVM 3 in Appe}. With exact or approximate spherical $t$-design POVMs ($t\geq 3$),
\Cref{final conclusion of recovery error1 for t-design larger 3} ensures a stable recovery of the ground-truth state when the number of state copies $M= O(n d^2 \ol{r}^2 \log n/\epsilon^2)$. Notably, this uniform guarantee is applied to all MPOs. It is worth noting that the $n$-qubit Clifford group comprises a unitary (not spherical) 3-design POVM \cite{webb2015clifford,zhu2017multiqubit,helsen2018representations}. Sampling Clifford circuits uniformly at random reproduces the first $3$ moments of the full unitary group endowed with the Haar measure. Unfortunately, unitary 3-designs are not precisely equivalent to spherical 3-designs, thus we cannot efficiently construct spherical 3-designs. However, this work \cite{ketterer2020entanglement} demonstrates one method to generate spherical 3-designs by extracting them from unitary designs. On the other hand, the SIC-POVM is not a $3$-design POVM, hence the assumption of "uniformly" distributed probabilities may be indispensable.

While the solution $\wh{\vrho}$ of \eqref{The loss function in QST for SIC-POVM} may be non-physical, we can impose additional PSD constraint to obtain a physical state without compromising the recovery guarantee in \Cref{Recovery error of various quantum POVM}. Alternatively, we can simply project $\wh \vrho$ onto the set of physical states $\setS_+:=\{\vrho \in \C^{d^n\times d^n}: \vrho \succeq \vzero, \trace(\vrho) = 1\}$. Denote by $P_{\setS_+}$ the projection onto the set $\setS_+$, which can be efficiently computed by projecting the eigenvalues onto a simplex \cite{condat2016fast}. Since the set $\setS_+$ is convex, the corresponding projector is non-expansive, and hence
\begin{eqnarray}
    \label{The nonexpansiveness of physical state}
    \|P_{\setS_+}(\wh{\vrho} )-\vrho^\star\|_F  =  \|P_{\setS_+}(\wh{\vrho} )-P_{\setS_+}(\vrho^\star)\|_F \le  \|\wh{\vrho}-\vrho^\star\|_F \le \epsilon,
\end{eqnarray}
which implies that the projection step ensures that the state becomes physically valid while preserving or even improving the recovery guarantee.

\paragraph*{Trace distance} \ \ The recovery guarantees discussed in this paper are established in the Frobenius norm (or Hilbert-Schmidt distance), which, while not a standard metric in quantum information, is widely adopted for analyzing MPO states and tensor networks \cite{baumgratz2013scalable,baumgratz2013scalablemax,guo2022quantum}. This measure is particularly useful for understanding how errors scale with factors such as the sample size and the number of state copies in large-scale MPO states. Notably, although the size of the density matrix grows exponentially with $n$, the computational complexity of the Frobenius norm for an MPO state remains linear with respect to $n$ in \cite{Oseledets11}. In contrast, there is no known efficient method to compute the trace distance or fidelity between two MPO states. Therefore, the choice of using Frobenius norm to characterize the recovery errors for MPO states is natural from a computational perspective.

On the other hand, an issue with the Frobenius norm is that the state itself can potentially have an exponentially small Frobenius norm, particularly for high-rank states. However, this is not typically the case for many physical systems, such as those at low temperatures or near pure states. For example, the thermal/Gibbs state, $\vrho^\star = \frac{e^{-\mH/T}}{\trace(e^{-\mH/T})}$, where $\mH$ denotes the Hamiltonian and $T$ the temperature (ignoring the Boltzmann constant) can be well approximated by a low-rank density matrix if $T$ is small. In addition, when $\mH$ describes a one-dimensional spin model with short-range interactions, the resulting thermal state can often be approximated by an MPO. Furthermore, even for pure states with MPS representations (a special case of MPO states), there exist no rigorous sample complexity bounds that are efficient (i.e. the number of state copies is proportional to the number of independent parameters encoding the state).

To present a formal theorem formulated in terms of the trace distance for practical quantum states that are appropriately low-rank, we utilize the result in \cite{coles2019strong} to establish the connection between trace distance and the one based on Frobenius distance. In particular, suppose $\vrho^\star$ is approximately low-rank and can be expressed as $\vrho^\star = \vrho^\star_s + \mN$, where $\vrho^\star_s$ is a rank-$s$ matrix\footnote{Here $s$ denotes the matrix rank, while $\vr = (r_1,\dots, r_{n-1})$ represents MPO ranks.} and $\mN$ represents the small residual. Note that $\mN =\vzero$ when the state is strictly low-rank. 
\begin{lemma}
\label{distance between F norm and trace norm lemma}
Suppose  $\vrho^\star = \vrho^\star_s + \mN$, where $\vrho^\star_s$ is a rank-$s$ matrix. Then for any $\wh\vrho\in\C^{d^n \times d^n}$, we have
\begin{eqnarray*}
\label{distance between F norm and trace norm}
\|\wh\vrho - \vrho^\star\|_1\leq 2\sqrt{s} \| \wh\vrho - \vrho^\star\|_F  + 2\sqrt{s}\|\mN\|_F + \|\mN\|_1.
\end{eqnarray*}
\end{lemma}
\begin{proof}
\begin{align*}
\|\wh\vrho - \vrho^\star\|_1 & \leq \|\wh\vrho - \vrho^\star_s\|_1 + \|\mN\|_1\nonumber\\
&\leq 2\sqrt{s}\| \wh\vrho - \vrho^\star_s \|_F + \|\mN\|_1\nonumber\\
& = 2\sqrt{s} (\| \wh\vrho - \vrho^\star + \mN \|_F) + \|\mN\|_1\nonumber\\
&\le 2\sqrt{s} \| \wh\vrho - \vrho^\star\|_F  + 2\sqrt{s}\|\mN\|_F + \|\mN\|_1,
\end{align*}
where the second inequality follows from \cite[Theorem 1]{coles2019strong}.
\end{proof}
\Cref{distance between F norm and trace norm lemma} supports the utilization of Hilbert-Schmidt distance for practical quantum states, demonstrating that the trace distance can be upper bounded by the Hilbert-Schmidt distance. This property (with $\mN = \vzero$) is utilized in \cite{haah2017sample} for obtaining guarantees in trace distance for low-rank states.
Now, based on \Cref{distance between F norm and trace norm lemma}, we can extend \Cref{Recovery error of various quantum POVM,Recovery error of various t-designs POVM 3} to trace distance. To unify results for both $2$-designs and $3$-designs, we define $\gamma_t(\vrho^\star) = \begin{cases}
    \gamma(\vrho^\star), & t = 2,\\
    1, & t > 2.
  \end{cases}$
\begin{cor}
Suppose $\{\mA_1, \ldots, \mA_{K}\}$ form a set of $\delta$-approximate $t$-design POVMs. Consider an MPO $\vrho^\star\in\C^{d^n \times d^n}$ that has MPO ranks $(r_1,\dots, r_{n-1})$ and is approximately low-rank, i.e, $\vrho^\star = \vrho^\star_s + \mN$.  If we measure the state $M$ times using the POVM, then with high probability, the solution $\wh \vrho$ of the constrained least squares \eqref{The loss function in QST for SIC-POVM} satisfies
\begin{eqnarray}
\label{final conclusion of recovery error1 for t-design larger 3 trace norm}
\|\wh\vrho - \vrho^\star\|_1 \lesssim \frac{ d\ol{r}\sqrt{(1+\delta)ns \gamma_t(\vrho^\star) \log n} }{(1-\delta)\sqrt{ M}} + 2\sqrt{s}\|\mN\|_F +\|\mN\|_1.
\end{eqnarray}
\label{Recovery error of various t-designs POVM trace norm}
\end{cor}
As shown in {Corollary} \ref{Recovery error of various t-designs POVM trace norm}, when the ground truth MPO state is approximately low-rank, the recovery error in terms of trace distance also exhibits a polynomial growth with respect to $n$.


\paragraph*{Minimax lower bound} \ \      To further illustrate the sample efficiency of our result, we examine the minimax lower bound, a standard tool for assessing the tightness of the recovery upper bound in regression models. Note that the statistical noise $\veta = \wh \vp - \vp$ has mean zero and covariance matrix $\mSigma$, where the elements are defined as ${\mSigma}[i,j] = \begin{cases} \frac{p_{i}(1-p_{i})}{M}, & i=j \\ -\frac{p_{i}p_{j}}{M}, & i\neq j \end{cases}$. The covariance matrix $\mSigma$ is approximately diagonal since the off-diagonal elements are expected to be significantly smaller than the diagonal elements; this is because $p_i$ is often very small due to the normalization property $\sum_{i} p_i = 1$. In addition, when $M$ is sufficiently large, statistical noise can often be well-approximated by Gaussian noise. Consequently, Gaussian noise has been directly used to simulate statistical noise in the quantum state tomography \cite{baumgratz2013scalable}. Thus, to simplify the analysis, we temporarily assume the statistical noise $\veta$ follows\footnote{Here we assume covariance $\frac{1}{MK}\mId$ to match the noise level: assume generic states where $\max_i{p_i}\le 1/2$, then the expected squared norm of $\veta$ is given by
$\frac{1}{2M} \le   \E \|\veta\|_2^2 =  \E \bigg[\sum_{k=1}^{K}  \eta_{k}^2\bigg]  = \sum_{k=1}^{K} \frac{p_{k} (1 - p_{k})}{M} \le \frac{1}{M},$
indicating that $\E{\|\wh\vrho - \vrho^\star\|_F}$ has order of $1/M$.} $\calN({\bf 0}, \frac{1}{MK}\mId)$, which satisfies $\E \|\veta\|_2^2 = \frac{1}{M}$.
Suppose $\{\mA_1, \ldots, \mA_{K}\}$ form a set of  $\delta$-approximate spherical $3$-design POVMs. Under these assumptions, it shares a similar setting to \cite[Corollary 11]{qin2024guaranteed}. By following a similar line of proof, we obtain the following  minimax lower bound for any estimator $\wh \vrho$
\begin{eqnarray}
    \label{minimax bound of error final_ conclusion}
    \inf_{\wh \vrho}\sup_{\vrho^\star\in \setX_{\vr}}\E{\|\wh\vrho - \vrho^\star\|_F} \geq \Omega\bigg(\sqrt{\frac{nd^2\ol r^2}{M}}\bigg).
\end{eqnarray}
Compared to \eqref{minimax bound of error final_ conclusion}, the recovery upper bound in \Cref{Recovery error of various t-designs POVM 3} is tight, up to logarithmic terms.

We now briefly discuss technical challenges in extending the above minimax lower bound to the statistical noise following multinomial distribution. Following \cite{flammia2012quantum,haah2017sample,huang2020predicting}, the first step is to construct a set of MPDO states $\{\vrho^i  \}_{i=1}^N$ such that $\min_{i\neq j}\|\vrho^{i} - \vrho^{j}\|_F\geq \epsilon$. Then, we reformulate the recovery problem as a multiple hypothesis testing problem and apply the Fano's Lemma to establish a lower bound on the minimax risk. This process involves computing an upper bound for the Kullback-Leibler (KL) divergence
between two multinomial distributions of empirical measurements for $\vrho^{i}$ and $\vrho^{j}$,
denoted by $\setQ^1$ and $\setQ^2$. Denoting the population measurements for  $\vrho^{i}$ and $\vrho^{j}$ by $\{p_k\}$ and $\{q_k\}$, respectively, then the KL divergence\footnote{Using the formulation for multinomial distributions, we have  $\text{KL}(\mathbb{Q}^1 || \mathbb{Q}^2) = \sum_{k=1}^{K}\log\frac{p_k}{q_k}  \sum_{f_1,\dots, f_K} f_k  \frac{M!}{\Pi_{k=1}^{K}f_k!} \\ \cdot\Pi_{k=1}^{K}q_k^{f_k} = \sum_{k=1}^{K}M p_k\log\frac{p_k}{q_k}$ where the last equation follows $\sum_{f_1,\dots, f_K} f_k \frac{M!}{\Pi_{k=1}^{K}f_k!}\Pi_{k=1}^{K}q_k^{f_k} = \E{f_k} = Mp_k$.} 
 is given by $\text{KL}(\mathbb{Q}^1 || \mathbb{Q}^2) = \sum_{k=1}^{K}M p_k\log\frac{p_k}{q_k}$.  Bounding the KL divergence from above requires the probabilities $\{q_k\}$ to not be too small. For general states \cite{huang2020predicting} or low-rank states \cite{flammia2012quantum}, we can apply a unitary rotation on the constructed states $\{\vrho^i\}$, which remain general or low-rank states, to ensure that the probabilities $\{p_i\}$ and $\{ q_i \}$ are approximately uniform. However, directly applying a global unitary rotation to MPDO states does not preserve the bond dimension of the original state, whereas applying a random local unitary rotation preserves the bond dimension but lacks effective statistical tools to guarantee approximately uniform probabilities. Despite this technical challenge, we conjecture that \eqref{minimax bound of error final_ conclusion} remains valid for multinomial noise, similar to the necessary conditions for recovering general states \cite{huang2020predicting} or low-rank states \cite{flammia2012quantum}, which are proportional to the degrees of freedom of the target state. We leave a thorough study as future research.\footnote{Recent work in \cite{soleimanifar2022testing} has established a necessary condition $\Omega(n^\frac{1}{2})$ for testing whether an unknown state is an MPS. However, it remains unclear whether this bound is tight, as the sufficient condition established in \cite{soleimanifar2022testing} still requires $O(nd\ol r^2)$. Moreover, recovering an MPO state is expected to require much more state copies than testing whether an unknown state is an MPS. }

\section{Projected Gradient Descent with Guaranteed Convergence}
\label{sec: algorithm}
In the literature of tensor recovery for tensor train (TT) format tensors, several iterative algorithms \cite{Rauhut17, rauhut2015tensor,budzinskiy2021tensor,qin2024guaranteed,qin2024computational} have been proposed and analyzed from random measurements. While these algorithms can be applied to MPO by rewriting MPO as TT \footnote{To establish the equivalence, we first reshape $\vrho$ into an $n$-th order tensor $\calX$ of size $d^2\times d^2\times \cdots \times d^2$, where each pair $(i_\ell,j_\ell)$ is mapped to a single index $s_\ell=i_\ell+d(j_\ell-1)$ for $\ell=1,\ldots,n$. Consequently, the $(s_1,\ldots,s_n)$-th element of $\calX$ becomes
\begin{eqnarray*}
\calX(s_1,\dots,s_n) = \vrho(i_1 \cdots i_\nqbit, j_1\cdots j_\nqbit)
= \mX_1^{s_1} \mX_2^{s_2} \cdots \mX_\nqbit^{s_\nqbit},
\end{eqnarray*}
where with abuse of notation we denote by $\mX_\ell^{s_\ell} = \mX_\ell^{i_\ell,j_\ell}$.}, the analysis has predominantly focused on real-valued tensors with either clear measurements or corrupted by Gaussian noise. Thus, the guarantees cannot be directly applied to QST involving complex-valued MPOs and statistical noise induced by the empirical measurements that follow a multinomial distribution. In this section, we propose an iterative optimization algorithm and study its convergence for solving the constrained least-squares problem \eqref{The loss function in QST for SIC-POVM} that minimizes the discrepancy between the empirical measurements $\wh\vp$ and the linear mapping of the estimated MPO $\vrho$.

We begin by reiterating the loss function in \eqref{The loss function in QST for SIC-POVM}:
\begin{eqnarray}
    \label{The loss function in QST for various quantum measurements}
    \min_{\vrho\in\setX_{\vr}} g(\vrho) = \|\calA(\vrho) - \wh\vp\|_2^2.
\end{eqnarray}
We solve this constrained optimization problem by a projected gradient descent (PGD) that iteratively updates
\begin{eqnarray}
    \label{gradient descent in the TT format}
    \vrho^{(\tau+1)} = \calP_{\setX_{\vr}}(\vrho^{(\tau)} - \mu \nabla g(\vrho^{(\tau)})  ),
\end{eqnarray}
where $\mu$ is the step size, $\nabla g(\vrho^{(\tau)}) = \sum_{k=1}^{K} (\< \mA_k, \vrho^{(\tau)}  \> - \wh\vp) \mA_k$ is the Wirtinger gradient \footnote{Note that $g(\vrho)$ is non-holomorphic (i.e., not complex differentiable), which poses challenges for the development and analysis of standard gradient-based optimization algorithms. Fortunately, we can adopt Wirtinger gradient to extend the classical gradient concept to functions of complex variables \cite{zhang2017matrix}. In our context, we first parameterize the objective
function as $g(\vrho,\vrho^*) = \sum_{k=1}^K(\<\mA_k^*,\vrho^* \> - \wh p_K^*)(\<\mA_k,\vrho \> - \wh p_K)$, and then compute the Wirtinger gradient $\nabla_{\vrho^*} g(\vrho,{\vrho^*})$ by treating $\vrho$ and $\vrho^*$ as two independent variables. We can then use $\nabla_{\vrho^*} g(\vrho,{\vrho^*})$ as the gradient in standard gradient-based optimization algorithms. To simplify the notation, we denote the Wirtinger gradient $\nabla_{\vrho^*} g(\vrho,{\vrho^*})$ simply as  $\nabla g(\vrho)$.}, and $\calP_{\setX_{\vr}}(\cdot )$ is the projection onto the set $\setX_{\vr}$.

Since computing the optimal projection onto the set of TT format tensors is already known as NP-hard \cite{hillar2013most}, computing the optimal projection onto the set $\setX_{\vr}$ is challenging even ignoring the constraint on the trace. Here, we use a two-step approximate projection $\calP_{\setX_{\vr}}(\cdot ) = \calP_{\trace = 1}( \text{SVD}_{\vr}^{tt}(\cdot))$, where $\text{SVD}_{\vr}^{tt}(\cdot)$ denotes the TT-SVD operation \cite{Oseledets11} that can compute an MPO (may not have trace 1) approximation for a given density matrix \footnote{The output $\text{SVD}_{\vr}^{tt}(\mA)$ is always Hermitian if $\mA$ is Hermitian.}, and $\calP_{\trace = 1}(\cdot)$ represents the projection onto the convex set $\{\vrho\in\C^{d^n\times d^n}: \trace(\vrho) = 1 \}$ to have unit trace \footnote{This can be achieved by computing the eigendecomposition and subsequently projecting the eigenvalues onto the simplex \cite{condat2016fast}. While this process may increase the MPO ranks, our observations indicate that this increase is often minimal or does not occur. However, this process (particularly computing the eigendecomposition) could be computationally expensive for large quantum states.
Alternatively, in our implementation, we simply use $\calP_{\trace=1}(\vrho) = \vrho/\trace(\vrho)$ which is computationally efficient while simultaneously preserving the MPO structure. Thus, in the analysis, we will assume the projection $\calP_{\trace = 1}(\cdot)$ also preserves the MPO structure of the input.}. While the TT-SVD operation $\text{SVD}_{\vr}^{tt}(\cdot)$ may not produce an optimal approximation, it can be approved to be sub-optimal. In particular, for any density matrix $\vrho$, we have \cite{Oseledets11} $\|\text{SVD}_{\vr}^{tt}(\vrho)- \vrho||_F^2 \leq (n-1)\min_{\vrho' \in\setX_{\vr}}\|\vrho' - \vrho\|_F^2$.
Moreover, the following result provides an improved bound when $\vrho$ can be approximated by an MPO.

\begin{lemma}(\cite[Lemma 26]{cai2022provable})
\label{Perturbation bound for TT SVD}
Let $\vrho^\star\in\C^{d^n\times d^n}$ be an MPO with MPO ranks $(r_1,\dots, r_{n-1})$. Denote by $\underline{\sigma}({\vrho^\star})$  the smallest TT singular value of $\vrho^\star$\footnote{Specifically, we reshape $\vrho$ into an $n$-th order tensor $\calX$ of size $d^2\times d^2\times \cdots \times d^2$ and define the $\ell$-th unfolding matrix of tensor $\calX$ as $\mX^{\< \ell\>}\in\C^{  d^{2\ell} \times d^{2\nqbit- 2\ell} }$, where the $(s_1\cdots s_\ell, s_{\ell+1}\cdots s_{\nqbit})$-th element of $\mX^{\< \ell\>}$ is given by $\mX^{\< \ell\>}(s_1\cdots s_\ell, s_{\ell+1}\cdots s_{\nqbit}) = \calX(s_1,\dots, s_{\nqbit})$. With the $\ell$-th unfolding matrix $\calX^{\<\ell\>}$ and the MPO ranks, we can obtain its smallest TT singular value $\underline{\sigma}(\calX)=\min_{\ell=1}^{n-1}\sigma_{r_\ell}(\calX^{\<\ell\>})$.}. Then for any perturbation $\mE\in\C^{d^n\times d^n}$ with $ C_n ||\mE||_F  \leq \underline{\sigma}({\vrho^\star})$ for some constant $C_n\geq 500 n$, we have
\begin{align*}
    \|\text{SVD}_{\vr}^{tt}(\vrho^\star + \mE)- \vrho^\star\|_F^2   \leq \|\mE||_F^2+\frac{600n||\mE\|_F^3}{\underline{\sigma}({\vrho^\star})}.
\end{align*}

\end{lemma}

\paragraph*{Local convergence of PGD} \ \ We now provide convergence analysis for the PGD in \eqref{gradient descent in the TT format}. Without loss of generality, our analysis will be presented for the general $\delta$-approximate $t$-design POVM. To unify results for both $2$-designs and $3$-designs, recall the definition  $\gamma_t(\vrho^\star) = \begin{cases}
    \gamma(\vrho^\star), & t = 2,\\
    1, & t > 2.
  \end{cases}$
\begin{theorem}[Local linear convergence of PGD]
\label{convergence rate in GD for global SIC-POVM specific}
Suppose that $\{\mA_1, \ldots, \mA_{K}\}$ form a set of $\delta$-approximate $t$-design POVMs, which is used to measure an MPO state $\vrho^\star$ $M$ times. Given an initialization $\vrho^{(0)}$ satisfying
\begin{eqnarray}
    \label{requirement of initialization main paper specific}
    \hspace{-1cm}\|\vrho^{(0)} - \vrho^\star \|_F<  \frac{\underline{\sigma}({\vrho^\star}) (d-1) (1 - \delta)^2 }{600 n (1 + \delta^2 +(4d-2)\delta)},
\end{eqnarray}
then with probability at least $ 1 - e^{- \Omega( n d^2 \ol{r}^2\log n )}$, the PGD in \eqref{gradient descent in the TT format} with step size $\frac{bK(d^n+1)}{d^{n}(1+b)(1 - \delta)}<\mu\leq \frac{(d-1)(d^n+1)(1-\delta)K}{(1+\delta)^2 d^{n+1}}$ converges at a linear rate and
\begin{eqnarray}
    \label{expansion of vrho - rvrho* final main theorem specific}
    \hspace{-0.8cm}\|\vrho^{(\tau)} - \vrho^\star\|_F^2 \leq O\bigg( \frac{n d^2\ol{r}^2(1+\delta)\gamma_t(\vrho^\star) \log n }{M }  \bigg)
\end{eqnarray}
for sufficiently large $\tau$.
\end{theorem}
The proof is given in {Appendix} \ref{Proof of convergence rate in the GD}. \Cref{convergence rate in GD for global SIC-POVM} provides a linear convergence analysis of the PGD algorithm along with its error characterization. Moreover, the recovery error presented in \eqref{expansion of vrho - rvrho* final main theorem} aligns closely with the results in \Cref{Recovery error of various quantum POVM,Recovery error of various t-designs POVM 3}. Since the problem \eqref{The loss function in QST for various quantum measurements} is nonconvex (as the MPO set is nonconvex), \Cref{convergence rate in GD for global SIC-POVM} requires an initialization close to the target state that satisfies \eqref{requirement of initialization main paper specific}.\footnote{The right-hand side of \eqref{requirement of initialization main paper specific} depends on \( \underline{\sigma}({\vrho^\star}) \), which could potentially be exponentially small for high-rank states. However, this is not the case for many practical states, such as those at low temperatures or systems close to pure states.}  However, as will be illustrated in \Cref{sec: experimental results}, our experiments show that even a randomly initialized PGD can efficiently converge to a stable recovery error. To relax the requirement on the initialization, the following result establishes another convergence guarantee with an additional requirement on the projection.
\begin{theorem}[Linear convergence of PGD]
\label{convergence rate in GD for global SIC-POVM}
Suppose that $\{\mA_1, \ldots, \mA_{K}\}$ form a set of $\delta$-approximate $t$-design POVMs, which is used to measure an MPO state $\vrho^\star$ $M$ times. Given an initialization $\vrho^{(0)}$, suppose the PGD in \eqref{gradient descent in the TT format} uses step size $\frac{bK(d^n+1)}{d^{n}(1+b)(1 - \delta)}<\mu\leq \frac{(d-1)(d^n+1)(1-\delta)K}{(1+\delta)^2 d^{n+1}}$ and that the projection $\calP_{\setX_{\vr}}$ satisfies
\begin{eqnarray}
    \label{requirement of projection}
    \|\calP_{\setX_{\vr}}(\vrho^{(\tau)}  ) - \vrho^\star \|_{F}^2  \leq  (1+ b)\|\vrho^{(\tau)}   - \vrho^\star \|_{F}^2,  \text{with} \ b< \frac{ (d-1) (1 - \delta)^2 }{1 + \delta^2 +(4d-2)\delta}.
\end{eqnarray}
Then with probability at least $ 1 - e^{- \Omega( n d^2 \ol{r}^2\log n )}$, the PGD converges at a  linear rate and
\begin{eqnarray}
    \label{expansion of vrho - rvrho* final main theorem}
    \hspace{-0.8cm}\|\vrho^{(\tau)} - \vrho^\star\|_F^2 \leq O\bigg( \frac{n d^2\ol{r}^2(1+\delta)\gamma_t(\vrho^\star) \log n }{M }  \bigg)
\end{eqnarray}
for sufficiently large $\tau$.
\end{theorem}
The proof is given in {Appendix} \ref{Proof of convergence rate in the GD}. Although the two-step approximate projection may lack rigorous proof of satisfying \eqref{requirement of projection}, we will demonstrate the convergence of the PGD algorithm with a random initialization, implying that $\calP_{\setX_{\vr}}(\cdot ) = \calP_{\trace = 1}( \text{SVD}_{\vr}^{tt}(\cdot))$ may satisfy the assumption in \eqref{requirement of projection}.

\paragraph*{Spectral initialization} \ \
We now discuss some potential initialization
strategies for the PGD. A common approach to generating an appropriate initialization for signal estimation in nonconvex scenarios in the spectral initialization approach. It has been widely employed for various inverse problems \cite{lu2020phase}, such as phase retrieval \cite{candes2015phase,luo2019optimal}, low-rank matrix recovery \cite{Ma21TSP,tong2021accelerating}, and structured tensor recovery \cite{Han20,TongTensor21,qin2024guaranteed}. In our context, this entails simply computing an MPO approximation to $\sum_{k=1}^{K}\frac{K(d^n+1)}{d^n}\wh p_k \mA_k$, where $\sum_{k=1}^{K}\frac{K(d^n+1)}{d^n}\wh p_k \mA_k$ is equivalent to
apply the adjoint operator of $\calA$ on the empirical measurements $\wh \vp$ with a scaling factor $K(d^n+1)/d^n$ that is used to balance the energy according to \eqref{The l2 norm of A(rho) approximate 2_designs}. Specifically, by using the same projection $\calP_{\setX_{\vr}}$ in the PGD, we can compute an initialization via
\begin{eqnarray}
    \label{equation of spectral initialization}
    \vrho^{(0)} = \calP_{\setX_{\vr}} \bigg( \sum_{k=1}^{K}\frac{K(d^n+1)}{d^n}\wh p_k \mA_k   \bigg).
\end{eqnarray}
The following result ensures that the initialization $\vrho^{(0)}$ is close to $\vrho^\star$.
\begin{theorem}
\label{spectral initialization in GD for global SIC-POVM}
Suppose that $\{\mA_1, \ldots, \mA_{K}\}$ form a set of $\delta$-approximate $t$-design POVMs, which is used to measure an MPO state $\vrho^\star$ $M$ times. By utilizing the spectral initialization in \eqref{equation of spectral initialization}, with probability $ 1 - e^{- \Omega( n d^2 \ol{r}^2\log n)}$, we have
\begin{eqnarray}
    \label{proof of upper bound spec main paper}
   \|\vrho^{(0)} - \vrho^\star\|_F\leq (1+ \sqrt{n-1})  \cdot O\bigg(\frac{ d\ol r \sqrt{(1+\delta)n \gamma_t(\vrho^\star) \log n}}{ \sqrt{M}} +\delta\|\vrho^\star\|_F \bigg).
\end{eqnarray}
\end{theorem}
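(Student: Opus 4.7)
The plan is a three-step reduction. First, I use TT-SVD quasi-optimality together with the trace-$1$ projection step inside $\calP_{\setX_{\vr}}$ to pass from the post-projection error to a pre-projection quantity. Second, I split this pre-projection quantity into bias (the deviation of $\E[\vrho^{(1)}]$ from $\vrho^\star$) and variance ($\vrho^{(1)}-\E[\vrho^{(1)}]$). Third, I bound the variance by adapting the covering-net strategy of \Cref{Recovery error of various quantum POVM} to the MPO structure of the post-projection target.

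Set $\vrho^{(1)}:=\sum_{k=1}^{K}\frac{K(d^n+1)}{d^n}\wh p_k\mA_k$, so $\vrho^{(0)}=\calP_{\setX_{\vr}}(\vrho^{(1)})$. Because $\vrho^\star\in\setX_{\vr}$, TT-SVD quasi-optimality together with the non-expansiveness of the trace-$1$ projection on the affine slice containing $\vrho^\star$ delivers
$$\|\vrho^{(0)}-\vrho^\star\|_F\le(1+\sqrt{n-1})\,\|\vrho^{(1)}-\vrho^\star\|_F,$$
which produces the leading $(1+\sqrt{n-1})$ factor. The second-moment identity $\int(\vw^\dagger\mX\vw)\vw\vw^\dagger\,d\vw=(\trace(\mX)\mId+\mX)/(d^n(d^n+1))$ shows that $\tfrac{K(d^n+1)}{d^n}\calA^*\calA(\vrho^\star)=\vrho^\star+\mId$ for an exact $2$-design, so the $\mId$-offset produced in $\E[\vrho^{(1)}]$ must be cancelled by the unit-trace projection in $\calP_{\setX_{\vr}}$ to make the bias proportional to $\|\vrho^\star\|_F$ rather than to the exponentially large $\|\mId\|_F$.

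Splitting $\vrho^{(1)}-\vrho^\star=(\E[\vrho^{(1)}]-\vrho^\star)+(\vrho^{(1)}-\E[\vrho^{(1)}])$, the bias (after the projection-induced cancellation) vanishes for exact $t$-designs, while for $\delta$-approximate designs the two-sided moment condition \eqref{the definition of approximate t designs2} leaves a residue of order $\delta\|\vrho^\star\|_F$, matching the second summand of the stated bound. For the variance $\tfrac{K(d^n+1)}{d^n}\calA^*(\veta)$, a naive Frobenius estimate costs $\sqrt{K/M}$ (exponential in $n$), so the MPO structure of $\vrho^{(0)}$ must be exploited. Since $\vrho^{(0)}-\vrho^\star\in\setX_{\vr}-\setX_{\vr}$ is itself a Hermitian MPO of rank $\le 2\vr$, I dualize the Frobenius norm as $\|\vrho^{(0)}-\vrho^\star\|_F=\sup_{\mY\in\setT}\<\vrho^{(0)}-\vrho^\star,\mY\>$, where $\setT$ is the set of unit-Frobenius Hermitian MPOs of rank $\le 2\vr$, and control this supremum directly rather than the full-ball Frobenius norm.

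For each fixed $\mY\in\setT$, I invoke \Cref{General bound of multinomial distribution Q cases1} with $a_k=\<\mA_k,\mY\>$, bounding $\sum_k a_k^2 p_k$ by $\gamma_t(\vrho^\star)\|\calA(\mY)\|_2^2$ and $a_{\max}$ by the structural size of a rank-$2\vr$ MPO evaluated against a rank-one $\mA_k$; for $t\ge 3$ the $\gamma_t$ factor is eliminated through the third-moment identity, exactly as in the proof of \Cref{Recovery error of various t-designs POVM 3}. A union bound over an $\epsilon$-net of $\setT$ with log-covering number $O(nd^2\ol r^2\log n)$---the same net used in the proof of \Cref{Recovery error of various quantum POVM}---then yields a variance contribution of order $d\ol r\sqrt{(1+\delta)n\gamma_t(\vrho^\star)\log n/M}$ with probability $1-e^{-\Omega(nd^2\ol r^2\log n)}$. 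Summing bias and variance and multiplying by $1+\sqrt{n-1}$ completes the proof. The main obstacle is orchestrating the duality reduction together with the $\mId$-offset cancellation so that the random fluctuations are measured only against the structured set $\setT$, avoiding the $\sqrt{d^n}$ scaling that would otherwise appear; the remainder is routine bookkeeping of the $(1\pm\delta)$ constants from the approximate-design condition.
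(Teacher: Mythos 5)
Your overall architecture matches the paper's proof: reduce via non-expansiveness of the trace projection and TT-SVD quasi-optimality to pick up the $(1+\sqrt{n-1})$ factor, split $\vrho^{(1)}-\vrho^\star$ into the population (bias) part and the multinomial fluctuation part, bound the fluctuation by the covering-net concentration argument already used for \Cref{Recovery error of various quantum POVM} (with the third-moment identity removing $\gamma_t$ when $t\ge 3$), and bound the bias by $\delta\|\vrho^\star\|_F$ from the two-sided approximate-design condition. These are exactly the steps in \Cref{Proof of spectral initialization in the GD}, which uses \eqref{Concentration inequality of sum of multinomial for SIC-POVM: conclusion} for the noise term and \Cref{l2 norm of approximate 2 designs-POVM RIP two density} for the bias term.

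Two points in your write-up would fail as literally stated, though both are repairable and the paper's proof shows how. First, your displayed reduction $\|\vrho^{(0)}-\vrho^\star\|_F\le(1+\sqrt{n-1})\|\vrho^{(1)}-\vrho^\star\|_F$ with the \emph{full} Frobenius norm on the right is vacuous: $\vrho^{(1)}-\vrho^\star$ contains the identity offset (of Frobenius norm $d^{n/2}$) and the unstructured noise $\frac{K(d^n+1)}{d^n}\calA^*(\veta)$, so the right-hand side is exponentially large. The entire chain must be carried out in the restricted norm $\|\cdot\|_{F,2\ol r}$ of \eqref{Definition of the restricted F norm1}, which is what the paper does; your later remark about controlling the supremum over $\setT$ directly is the right instinct, but it has to replace, not follow, the full-norm reduction. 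Second, you attribute the cancellation of the $\mId$-offset to the unit-trace projection inside $\calP_{\setX_{\vr}}$. That operator acts \emph{after} TT-SVD (and in the paper's implementation is just a rescaling $\vrho/\trace(\vrho)$), so it cannot be relied upon to subtract an additive multiple of $\mId$ that the rank-$\vr$ truncation has already mixed into the factors. The actual mechanism is that $\vrho^{(0)}-\vrho^\star$ is traceless, so the dual test set can and must be taken to consist of \emph{traceless} unit-Frobenius MPOs (the set $\ol\setX_{2\vr}$ in \eqref{SetOfMPO normalized}); then $\<\mId,\wt\vrho\>=\trace(\wt\vrho)=0$ annihilates the offset, and the residual bias is $\max_{\wt\vrho}\,\delta\<\vrho^\star,\wt\vrho\>\le\delta\|\vrho^\star\|_F$ as in \eqref{proof of I F norm}. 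Your set $\setT$ omits the trace-zero condition, and without it a test direction such as $\mY=\mId/\sqrt{d^n}$ makes the bias term blow up to order $\sqrt{d^n}$. With these two corrections your argument coincides with the paper's.
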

Its proof is given in {Appendix} \ref{Proof of spectral initialization in the GD}. The scaling term $\sqrt{n-1}$ arises due to the sub-optimal bound for the projection $\calP_{\setX_{\vr}}(\cdot ) = \calP_{\trace = 1}( \text{SVD}_{\vr}^{tt}(\cdot))$. Recall that $\gamma_t(\vrho^\star) = 1$ for all $\vrho^\star$ when spherical 3-design POVMs (i.e., $t\ge 3$), and $\gamma_t(\vrho^\star) = \gamma(\vrho^\star)$ for spherical 2-design POVMs (i.e., $t=2$). \Cref{spectral initialization in GD for global SIC-POVM} shows that  when $\gamma_t(\vrho^\star)$ is small---true for all MPOs when using $3$-designs and for those having relatively uniformed probability distribution with 2-design POVMs---the spectral initialization provides a relatively good estimation with a polynomially large $M$. In particular, considering exact $t$-design POVMs or cases where $\delta$ is small, \eqref{proof of upper bound spec main paper} ensures that $\|\vrho^{(0)} - \vrho^\star\|_F^2$ scales in ther order of $\frac{n^2d^2 \ol r^2}{M}$ when ignoring the log terms and $\gamma_t(\vrho^\star)$.
Note that the requirements on the number of measurements are slightly more stringent than those in \Cref{Recovery error of various quantum POVM} and \Cref{Recovery error of various t-designs POVM 3} due to the sub-optimal bound for the projection $\calP_{\setX_{\vr}}(\cdot )$. However, as will be illustrated in \Cref{sec: experimental results}, our numerical experiments show that even a randomly initialized PGD can very efficiently find an estimate of the target state. A theoretical explanation for this phenomenon will be the focus of future research.

\paragraph*{Projected stochastic gradient descent} \ \ The primary computational complexity of the PGD \eqref{gradient descent in the TT format} arises from computing the gradient, which involves an exponentially large number ($K$) of inner product operations between $\mA_k$ and $\vrho^{(\tau)}$, with a complexity of $O(K d^{2n})$ by naive computation. When the POVMs $\{\mA_k\}_{k=1}^K$ also have MPO representations, such as the local measurements as detailed in the subsequent section, the complexity of each inner product can be reduced to $O(n{\ol r}^3)$ from $O(d^{2n})$ by using the dot product method \cite{Oseledets11} for two MPOs. Similarly, the gradient update $\vrho^{(\tau)} - \mu \nabla g(\vrho^{(\tau)})$ in \eqref{gradient descent in the TT format} can also be represented in the MPO form rather than as a full density matrix, allowing us to utilize TT-rounding to efficiently compute TT-SVD with polynomial computational complexity in $n$ \cite{Oseledets11}.

The issue of computing all the $K$ inner products in one iteration can be addressed by using stochastic methods \cite{bottou2018optimization}. In our context, exploiting the fact that the empirical probabilities $\{\wh{p}_i\}$ are very sparse (at most $M$ of them are nonzero) when the measured time $M$ is much smaller than $K$, we can utilize a customized stochastic method. Specifically, unlike standard approaches that either randomly or sequentially select a subset of samples for computing gradient in each iteration, we can use the nonzero measurements more frequently. For instance, each time (often called epoch in machine learning), we first select a subset (say $N$) of empirical probabilities that include all the nonzero ones and a few numbers of zero ones (which can be either randomly selected or selected in a sequential), and then perform $N/B$ iterations of PSGD with each iteration using $B$ measurements selected sequentially form this subset.

\section{Numerical Experiments with Local IC-POVMs}
\label{sec: experimental results}

To our best knowledge, there are no known efficient implementations of SIC-POVM or spherical $t$-designs with $t\ge 2$ for a many-qubit system using local quantum circuits available in current or near-future quantum hardware. It is not surprising that the sample-efficient measurement setting for recovering an MPO state is not efficiently implementable, as this is also the case for the QST of generic quantum states \cite{haah2017sample,guctua2020fast,francca2021fast}. To make our results more experimentally relevant, in this section we consider a different type of IC-POVM that is efficiently implementable. Specifically, we consider the following local SIC-POVM $\{\mB_i\}_{i=1}^4$ for each qubit
\begin{eqnarray} \label{set of local SIC-POVM}
    \bigg\{\begin{bmatrix}\frac{1}{2}& 0 \\ 0 & 0 \end{bmatrix}, \begin{bmatrix}\frac{1}{6}& \frac{\sqrt{2}}{6} \\ \frac{\sqrt{2}}{6} & \frac{1}{6} \end{bmatrix}, \begin{bmatrix}\frac{1}{6}& \frac{\sqrt{2}}{6}e^{-i\frac{2\pi}{3}} \\ \frac{\sqrt{2}}{6}e^{i\frac{2\pi}{3}} & \frac{1}{6} \end{bmatrix},  \begin{bmatrix}\frac{1}{6}& \frac{\sqrt{2}}{6}e^{-i\frac{4\pi}{3}} \\ \frac{\sqrt{2}}{6}e^{i\frac{4\pi}{3}} & \frac{1}{6} \end{bmatrix}  \bigg\}.
\end{eqnarray}
We can then generate the set of IC-POVM for an $n$-qubit system as $\{\mA_i \}_{i=1}^{4^n} = \{\mB_{i_1}\otimes \cdots \otimes \mB_{i_n} \}_{i_1,\dots,i_n}$. Importantly, such IC-POVM has been recently demonstrated experimentally using trapped ions in a scalable way \cite{stricker2022experimental} and can be implemented in most quantum computing hardware platforms using either ancilla qubits or extra low-energy states in the physical carriers of the qubits.

\begin{figure*}[t]
\centering
\subfigure[]{
\begin{minipage}[t]{0.33\textwidth}
\centering
\includegraphics[width=\textwidth]{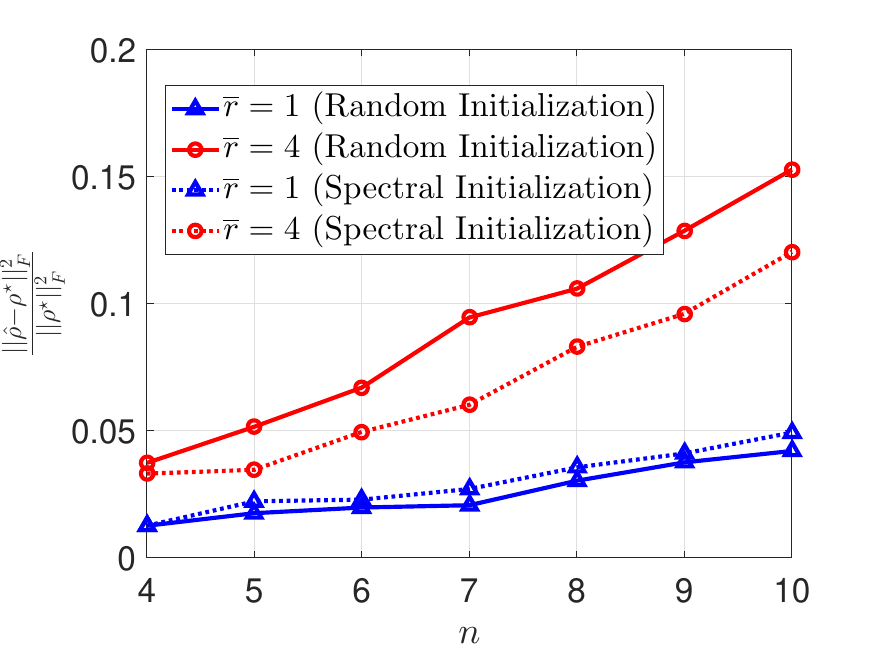}
\end{minipage}
\label{Performance Comparison_initilaizaiton recovery error}
}
\subfigure[]{
\begin{minipage}[t]{0.33\textwidth}
\centering
\includegraphics[width=\textwidth]{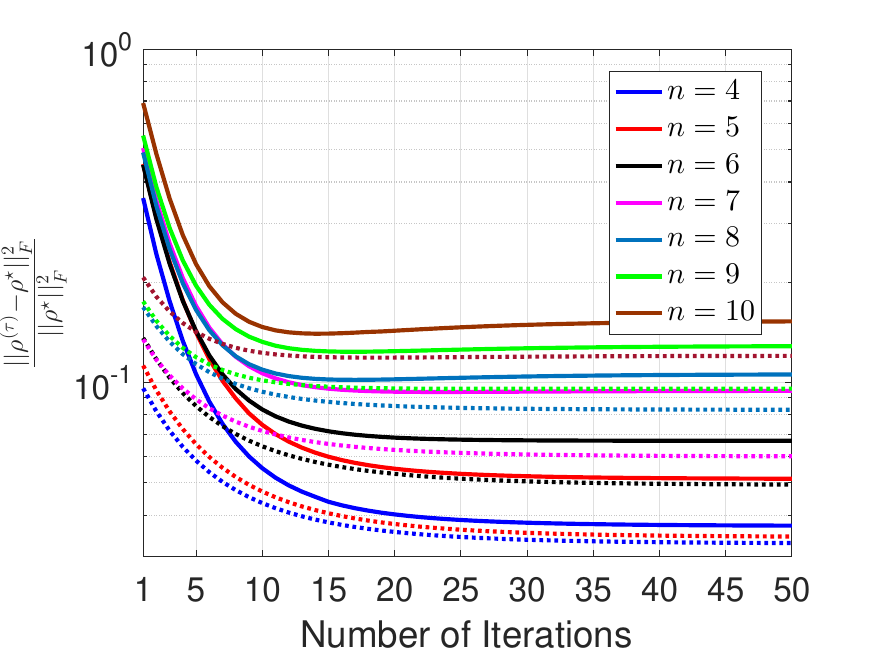}
\end{minipage}
\label{Performance Comparison_convergence}
}
\subfigure[]{
\begin{minipage}[t]{0.33\textwidth}
\centering
\includegraphics[width=\textwidth]{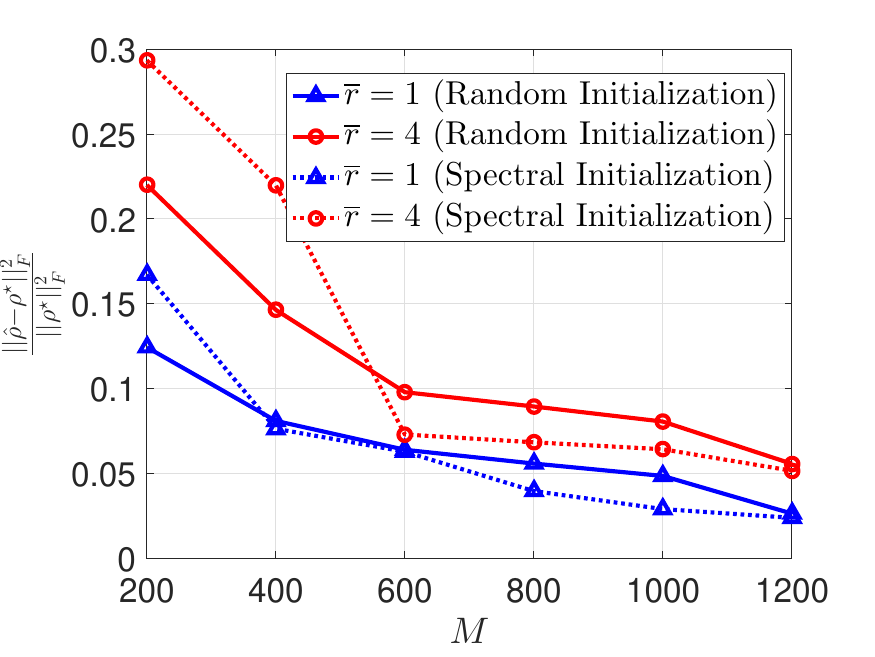}
\end{minipage}
\label{Performance Comparison_different M}
}
\subfigure[]{
\begin{minipage}[t]{0.33\textwidth}
\centering
\includegraphics[width=\textwidth]{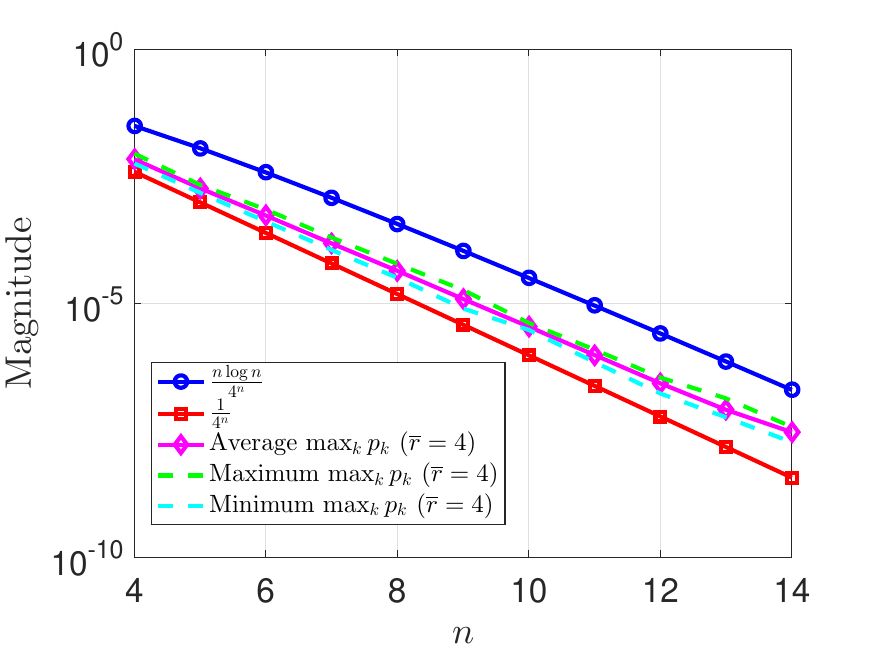}
\end{minipage}
\label{probability}
}
\caption{Illustration of (a) recovery error for different initialization methods, $\ol r$ and $n$ with $M = 3000$; (b) convergence rate of PGD for different $n$, spectral initialization (dotted lines), and random initialization (solid lines) with $M = 3000$ and $\ol r=4$; (c) recovery error with $n=5$ for different initialization methods, $\ol r$ and $M$; (d) $p_\text{max}$ for the MPDO.}
\end{figure*}

While our main results (\Cref{Recovery error of various quantum POVM,Recovery error of various t-designs POVM 3}) do not apply to such a measurement setting, we show that certain MPO states can be recovered using this IC-POVM using a number of state copies only linear in $n$ and the PGD and PSGD algorithms developed in \Cref{sec:stable recovery}. A more systematic study on what types of MPO states can be recovered efficiently using the local IC-POVM can be found in \cite{Jameson2024}, using a different optimization algorithm based on maximal likelihood estimation (MLE). As an example, here we consider a random matrix product density operator (MPDO) state \cite{verstraete2004matrix} of the following form \eqref{DefOfMPO}, i.e, the $(i_1 \cdots i_\nqbit, j_1 \cdots j_\nqbit)$-th entry of $\vrho^\star$ can be generated by
\begin{eqnarray}
\vrho^\star(i_1 \cdots i_\nqbit, j_1 \cdots j_\nqbit)
=  {\mX_1^\star}^{i_1,j_1} {\mX_2^\star}^{i_2,j_2} \cdots {\mX_\nqbit^\star}^{i_\nqbit,j_\nqbit}.\nonumber
\end{eqnarray}
To ensure $\vrho^\star$ PSD, we generate each matrix ${\mX_\ell^\star}^{i_\ell,j_\ell}$ as \cite{verstraete2004matrix}
\begin{eqnarray}
\label{form of X l}
{\mX_\ell^\star}^{i_\ell,j_\ell} = \sum_{a_\ell = 1}^{K_\ell} {\mA_\ell^\star}^{i_\ell, a_\ell}\otimes ({\mA_\ell^\star}^{j_\ell, a_\ell})^*,
\end{eqnarray}
where the symbol $*$ represents the complex conjugate, $\otimes$ denotes the Kronecker product, $K_\ell$ is an arbitrary positive integer that controls purity of the generated state,
and each matrix ${\mA_\ell^\star}^{i_\ell, a_\ell}$ has dimension $\kappa\times \kappa$, except for ${\mA_1^\star}^{i_1, a_1}$ and ${\mA_n^\star}^{i_n, a_n}$, which are of dimension $1\times\kappa$ and $\kappa\times 1$ respectively. Each entry of ${\mA_\ell^\star}^{i_\ell, a_\ell}$ is i.i.d. randomly generated with both its real and imaginary parts drawn uniformly from the interval $[-1,1]$. In the experiments, we set $K_\ell = 10$ to ensure the state is sufficiently mixed; using $K_\ell = 1$ would generate a pure state. The generated $\vrho^\star$ is an MPO with a bond dimension of $\ol r = \kappa^2$ and is always PSD. To be a physical state, $\vrho^\star$ also needs to have unit trace, which can be done by calculating
\begin{eqnarray}
\trace(\vrho^\star)
= ({\mX_1^\star}^{1,1} + {\mX_1^\star}^{2,2})\cdots ({\mX_n^\star}^{1,1} + {\mX_n^\star}^{2,2}),
\nonumber\end{eqnarray}
and then dividing each matrix ${\mX_\ell^\star}^{i_\ell,j_\ell}$ by $(\trace(\vrho^\star))^{\frac{1}{n}}$.

In order to validate the theoretical analysis presented in the preceding sections, we apply PGD and PSGD with spectral initialization across various measurements. To further mitigate the high computational complexity of the spectral method, we will also use a random initialization generated by a random MPDO as the starting point for both PGD and PSGD. This approach is motivated by our observation that random initialization, when the target is a random MPDO, can effectively serve as a convergence-guaranteeing initialization.

\begin{figure*}[t]
\centering
\subfigure[]{
\begin{minipage}[t]{0.33\textwidth}
\centering
\includegraphics[width=\textwidth]{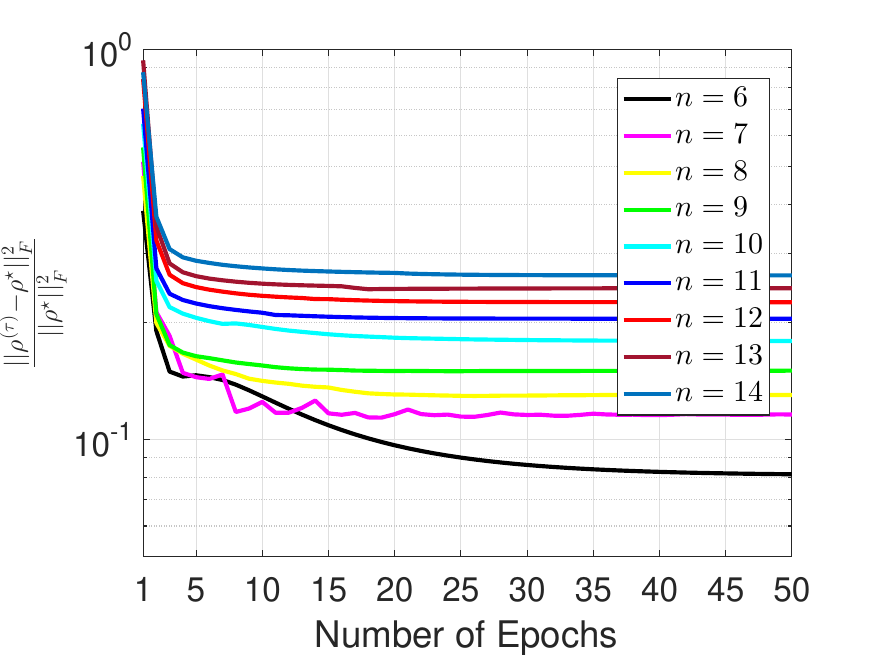}
\end{minipage}
\label{Convergence of SGD}
}
\subfigure[]{
\begin{minipage}[t]{0.33\textwidth}
\centering
\includegraphics[width=\textwidth]{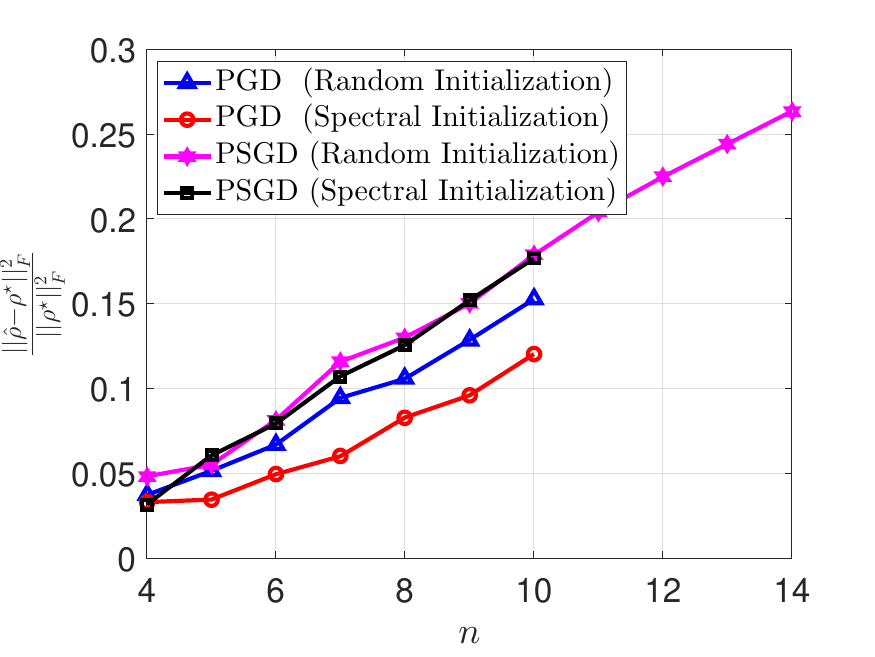}
\end{minipage}
\label{Accuracy of SGD}
}

\caption{(a)Convergence comparison of PSGD with random initialization for various $n$; (b) Performance comparison between PGD and PSGD with random and spectral initializations.}
\end{figure*}

\paragraph*{Performance of PGD}
\ \
In the first set of experiments, we set $M = 3000$ for empirical measurements. Since we will perform experiments for different number of qubits, we employ PGD with diminishing step sizes $\mu_\tau = \mu_0 \cdot 2^n \cdot \lambda^\tau$ (with $\lambda = 0.9$), which is found to provide stable performance across varying $n$. When a random initialization is used, we set $\mu_0 = \frac{5}{4}$ for $\ol r=1$ and $\mu_0 = \frac{5}{8}$ for $\ol r=4$. As for spectral initialization, $\mu_0$ is set to $\frac{5}{8}$ for $\ol r=1$ and $\frac{5}{16}$ for $\ol r=4$.
\Cref{Performance Comparison_initilaizaiton recovery error} depicts the recovery error as a function of $n$ for various $\ol r$ and initialization methods. As anticipated, the recovery error tends to rise with an increase in $\ol r$, yet the PGD algorithm displays stability across all cases. Notably, we observe that the recovery error increases only polynomially rather than exponentially with respect to $n$, aligning with the findings in \Cref{Recovery error of various quantum POVM,Recovery error of various t-designs POVM 3}. Furthermore, \Cref{Performance Comparison_convergence} illustrates the convergence behavior of PGD with both spectral and random initializations. As $n$ increases, the quality of initial values from spectral initialization deteriorates, aligning with the theoretical analysis presented in \Cref{spectral initialization in GD for global SIC-POVM}. Conversely, although the recovery error with random initialization is slightly worse than that of the spectral method, it offers a reliable starting point for stable recovery while avoiding the high computational complexity associated with the spectral method.
Subsequently, we evaluate the recovery performance across varying values of $M$, selecting $\mu_0 = \frac{5}{32}$ for random initialization and $\mu_0 = \frac{1}{16}$ for spectral initialization, with $\lambda = 1$. In \Cref{Performance Comparison_different M}, we observe that the recovery performance of PGD with spectral and random initializations is comparable, with improved recovery error as $M$ increases. Finally, \Cref{probability} reveals that the numerator of the average maximum probability is polynomial, ensuring the effectiveness of the assumption.

\paragraph*{Performance comparison between PGD and PSGD}\ \ In the second experiment, we compare the performance of PGD and PSGD in estimating the MPDO. We set $M = 3000$ and $\ol r = 4$.
We use the same setting for PGD in \Cref{Performance Comparison_initilaizaiton recovery error} and use diminishing step sizes $\mu_\tau = \frac{5}{4} \times 2^n \times 0.9^\tau$ for random initialization and $\mu_\tau = 10\times 0.9^\tau$ for spectral initialization. Additionally, we choose $N = 10 \times 4 n\ol r^2 = 640n$ measurements in one epoch and $B = 32$ in each iteration of one epoch. \Cref{Convergence of SGD} provides an analysis of the convergence properties of PSGD with random initialization, demonstrating that this achieves a linear convergence rate and exhibits stable recovery errors as $n$ increases. Furthermore, due to its polynomial computational complexity, PSGD with random initialization is viable for systems with up to 14 qubits. In \Cref{Accuracy of SGD}, we observe that PSGD with random initialization delivers performance comparable to that achieved with spectral initialization. Notably, although PSGD with polynomial measurements incurs a marginally higher recovery error compared to PGD, it nonetheless achieves a stable recovery error that grows polynomially with $n$.

To further elucidate the computational complexity of the two methods, we summarize the computational cost for a single iteration of PGD and one epoch of PSGD under local measurements as follows:
\begin{eqnarray*} \label{complexity of PGD}
    \text{PGD:} \ O\bigg( \underbrace{4^n n{\ol r}^3  }_{\text{inner product}} + \underbrace{ 4^n {\ol r}^2 }_{\text{TT-SVD}} + \underbrace{n{\ol r}^2 }_{\text{unit trace}}   \bigg),
\end{eqnarray*}
\begin{eqnarray*} \label{complexity of SPGD}
\text{PSGD:} \ O\bigg( \frac{N}{B}(\underbrace{B n{\ol r}^3  }_{\text{inner product}}
+  \underbrace{ n  ({\ol r} + B )^3 }_{\text{rounding}} + \underbrace{n{\ol r}^2 }_{\text{unit trace}} )  \bigg).
\end{eqnarray*}

\section{Conclusion}
\label{sec: conclusion}

This paper focuses on the sample complexity bounds for recovering structured quantum states in one-dimension that are represented as matrix product operators (MPOs) with finite bond dimensions. The study begins by establishing theoretical bounds on the accuracy of a constrained least-squares estimator for MPO state recovery, utilizing empirical measurements from a class of Informationally Complete Positive Operator-Valued Measure (IC-POVM). The results indicate that assuming probabilities measured by one POVM are approximately uniform for SIC-POVM or $2$-designs, or without any assumption for $t$-designs ($t\geq 3$), a stable recovery guarantee requires the number of state copies to be only linear in the the number of qudits $n$ and is thus sample efficient. Additionally, the paper introduces a provable projected gradient descent (PGD) algorithm. It demonstrates that, with appropriate initialization achievable through spectral initialization, PGD converges to the neighborhood of the target state at a linear rate and achieves an error bound scaling only linear in $n$. These results address fundamental questions on whether quantum states with efficient classical representations can be recovered efficiently. While the measurement settings used in our sample complexity bounds are not efficiently implementable, we show that certain MPO states may be recovered efficiently using a local IC-POVM that can be experimentally achieved on current quantum hardware. It remains an open question what class of MPO states, or other structured quantum states, can be recovered not only using an efficient number of state copies, but also using an efficiently implementable measurement setting \cite{Jameson2024}.

\section*{Acknowledgment}
We acknowledge funding support from NSF Grants No.\ CCF-1839232, PHY-2112893, CCF-2106834, CCF-2241298 and ECCS-2409701 as well as the W. M. Keck Foundation. We thank the Ohio Supercomputer Center for providing the computational resources needed in carrying out this work.

\crefalias{section}{appendix}
\appendices

\section{Proof of \Cref{l2 norm of SIC-POVM RIP}}
\label{Proof of RIP SIC POVM in Appe}

\begin{proof}
According to \cite{rastegin2014notes}, any Hermitian matrix $\vrho\in\C^{d^n\times d^n}$ can be represented in terms of elements of the dual basis as
\begin{eqnarray}
\label{The expression of density matrix in the dual basis}
\vrho = \sum_{k=1}^{d^{2n}}\<\mA_k, \vrho \>\widetilde{\mA}_k.
\end{eqnarray}
Here,  the dual basis $\{\widetilde{\mA}_k  \}$ is a basis such that $    \<\mA_k, \widetilde{\mA}_j \> = \begin{cases}
    1, & k = j\\
    0, & k \neq j
  \end{cases}$. For a general SIC-POVM $\{\mA_k  \}$, the dual basis is comprised by operators \cite{gour2014construction}
\begin{eqnarray}
\label{The defi of dual basis}
\widetilde{\mA}_k = d^n(d^n + 1) \mA_k - \mId_{d^n}.
\end{eqnarray}
Using $\vrho$ in \eqref{The expression of density matrix in the dual basis}, we have
\begin{eqnarray}
\label{The square of rho}
\trace(\vrho^2)&\!\!\!\! = \!\!\!\!& (d^n(d^n+1) - 1)\| \calA(\vrho)\|_2^2 - \sum_{j\neq k}\< \mA_j, \vrho\>\< \mA_k, \vrho\>\nonumber\\
&\!\!\!\! = \!\!\!\!&(d^n(d^n+1) - 1)\| \calA(\vrho)\|_2^2 - \big(\big(\sum_{i=1}^{d^{2n}}\<\mA_i, \vrho \>  \big)^2 - \| \calA(\vrho)\|_2^2\big) \nonumber\\
&\!\!\!\! = \!\!\!\!&d^n(d^n+1)\| \calA(\vrho)\|_2^2 - (\trace(\vrho))^2,
\end{eqnarray}
where the first line follows (22) in \cite[Proposition 1]{rastegin2014notes}. In addition, due to $\vrho = \vrho^\dagger$, we have $\trace(\vrho^2) = \trace(\vrho^\dagger \vrho) = \|\vrho\|_F^2$. This completes the proof.
\end{proof}

\section{Proof of \Cref{{l2 norm of approximate 2-designs RIP}}}
\label{Proof of RIP 2 designs in Appe}

\begin{proof}
According to \cite[Theorem 1]{dall2014accessible} and \Cref{ty of 2-designs}, we have
\begin{eqnarray}
\label{sum of 2-designs measurement}
\int (\vw\vw^\dagger)^{\otimes 2} d\vw = \frac{\mId + \mS }{d^n(d^n+1)},
\end{eqnarray}
where $\mS$ is the swap operator.  Taking the inner product between \eqref{sum of 2-designs measurement} and $\frac{d^{2n}}{K}\vrho^{\otimes 2}$ we get
\begin{eqnarray}
\label{The l2 norm of A(rho) 2_designs proof}
\bigg\<\int (\vw\vw^\dagger)^{\otimes s} d\vw, \frac{d^{2n}}{K}\vrho^{\otimes 2} \bigg\> &\!\!\!\! = \!\!\!\!& \frac{d^n(\trace(\vrho^{\otimes 2}) + \trace(\mS\vrho^{\otimes 2}) )}{K(d^n+1)}\nonumber\\
&\!\!\!\! = \!\!\!\!&\frac{d^n((\trace(\vrho))^2 + \trace(\vrho^2)) }{K(d^n+1)}\nonumber\\
&\!\!\!\! = \!\!\!\!&\frac{d^n((\trace(\vrho))^2 + \|\vrho\|_F^2 )}{K(d^n+1)},
\end{eqnarray}
where the penultimate line follows $\trace(\mS\vrho^{\otimes 2}) = \trace(\vrho^2)$ \cite[Lemma 17]{KuengACHA17} for any Hermitian matrix $\vrho$. Based on $\trace((\vw_k\vw_k^\dagger)^{\otimes 2}(\vrho^{\otimes 2})  ) = \trace( (\vw_k\vw_k^\dagger\vrho)\otimes (\vw_k\vw_k^\dagger\vrho) ) = (\trace(\vw_k\vw_k^\dagger\vrho))^2$, this completes the proof.
\end{proof}

\section{Proof of \Cref{Recovery error of various quantum POVM}}
\label{Proof of recovery error of SIC-POVM in Appe}

\begin{proof}
Now, we start by upper-bounding $\<  \veta, \calA(\wh{\vrho} - \vrho^\star) \>$. Towards that goal, we first rewrite this term as
\begin{eqnarray}
    \label{upper bound of entire variable_SIC_POVM}
    \<  \veta, \calA(\wh{\vrho} - \vrho^\star) \> &\!\!\!\!=\!\!\!\!& \sum_{k=1}^{K} \eta_{k}\<\mA_k, (\wh{\vrho} - \vrho^\star) \>\nonumber\\
    &\!\!\!\! \leq\!\!\!\!& \|\wh{\vrho} - \vrho^\star\|_F \max_{\vrho\in \ol \setX_{2\vr}} \sum_{k=1}^{K} \eta_{k}\<\mA_k, \vrho \>,
\end{eqnarray}
where we denote by $\ol \setX_{\vr}$ the normalized set of MPOs with $\vr = [r_1,\dots, r_{n-1}]$:
\begin{eqnarray}
\label{SetOfMPO normalized}
\begin{split}
\ol \setX_{\vr}= \Big\{& \vrho\in\C^{d^n\times d^n}:\ \vrho = \vrho^\dagger, \|\vrho\|_F \leq  1, \trace(\vrho) = 0,\\
& \vrho(i_1 \cdots i_\nqbit, j_1\cdots j_\nqbit) = \mX_1^{i_1,j_1} \mX_2^{i_2,j_2} \cdots \mX_\nqbit^{i_\nqbit,j_\nqbit}, \\
& \mX_\ell^{i_\ell,j_\ell}\in\C^{r_{\ell-1}\times r_\ell}, \ell\in[n-1], r_0=r_n=1\Big\}.
\end{split}
\end{eqnarray}
It is important to highlight that in comparison to $\setX_{\vr}$ in \eqref{SetOfMPO}, $\ol \setX_{\vr}$ additionally includes $\|\vrho\|_F\leq 1$ and $\trace(\vrho) = 0$.

The rest of the proof is to bound $\max_{\vrho} \sum_{k=1}^{K} \eta_{k}\<\mA_k, \vrho \>$, which will be achieved by using a covering argument.
First, when conditioned on $\mA_k$, we consider any fixed value of $\vrho^{(p)} = [\mX_1^{(p_1)},\dots, \mX_n^{(p_n)}]\in \ol \setX_{2\vr}$ and according to \cite[Appendix B-A]{qin2024quantum}, construct a $\xi$-net $\{\vrho^{(1)},\ldots,\vrho^{(N_1\cdots N_n)}\} = \{[\mX_1^{(1)},\dots, \mX_n^{(1)}], \dots, [\mX_1^{(N_1)},\dots, \mX_n^{(N_n)}]  \}$ such that $\sup_{\mX_\ell: \|\mX_\ell\|_F\leq 1}~\min_{p_\ell\leq N_\ell} \|\mX_\ell-\mX_\ell^{(p_\ell)}\|_F\leq \xi, \ell = 1,\dots, n$ with covering number $\Pi_{\ell=1}^n N_\ell \leq (\frac{4+\xi}{\xi})^{8r_1+\sum_{i=2}^{n-1}16r_{i-1}r_i+8r_{n-1}}$.

Then we apply \Cref{General bound of multinomial distribution Q cases1} to establish a concentration inequality for the expression $\sum_{k=1}^{K} \eta_{k}\<\mA_k, \vrho^{(p)} \>$ as follows:
\begin{eqnarray}
    \label{Concentration inequality of sum of multinomial for SIC-POVM}
    \P{\sum_{k=1}^{K} \eta_{k}\<\mA_k, \vrho^{(p)} \> \geq t} &\!\!\!\! \leq \!\!\!\!&  e^{-\frac{Mt}{4\max_{k}|\<\mA_k, \vrho^{(p)} \>|}\min\bigg\{1, \frac{\max_{k}|\<\mA_k, \vrho^{(p)} \>| t}{4\sum_{k = 1}^{K}\<\mA_k, \vrho^{(p)} \>^2p_{k} } \bigg\}  }  +  e^{-\frac{Mt^2}{8\sum_{k = 1}^{K}\<\mA_k, \vrho^{(p)} \>^2p_{k} }}\nonumber\\
    &\!\!\!\! \leq \!\!\!\!&  e^{-\frac{Mt^2}{16\sum_{k = 1}^{K}\<\mA_k, \vrho^{(p)} \>^2p_{k} }} +  e^{-\frac{Mt^2}{8\sum_{k = 1}^{K}\<\mA_k, \vrho^{(p)} \>^2p_{k} }},
\end{eqnarray}
where without loss of generality, we assume that $\frac{\max_{k}|\<\mA_k, \vrho^{(p)} \>| t}{4\sum_{k = 1}^{K}\<\mA_k, \vrho^{(p)} \>^2p_{k} }\leq 1$ in the last line.

Since $\sum_{k = 1}^{K}\<\mA_k, \vrho^{(p)} \>^2 \leq \frac{(1 + \delta)d^n}{K(d^n + 1)}$ from \Cref{l2 norm of approximate 2-designs RIP} and $\max_{k}p_k = \frac{\gamma(\vrho^\star)}{K}$, we can derive $\sum_{k = 1}^{K}\<\mA_k, \vrho^{(p)} \>^2p_{k} \leq \frac{(1 + \delta)d^n \gamma(\vrho^\star)}{K^2(d^n + 1)}$ and thus, have
\begin{eqnarray}
    \label{Concentration inequality of sum of multinomial for approximate 2-designs: final}
    \P{\sum_{k=1}^{K} \eta_{k}\<\mA_k, \vrho^{(p)} \> \geq t}\leq  2e^{-\frac{K^2(d^n+1)Mt^2}{16 (1 + \delta) d^n \gamma(\vrho^\star)}}.
\end{eqnarray}
According to the same analysis of \cite[eq.(110)]{qin2024quantum} with the covering number $(\frac{4+\xi}{\xi})^{8r_1+\sum_{i=2}^{n-1}16r_{i-1}r_i+8r_{n-1}}$, we can derive
\begin{eqnarray}
    \label{Concentration inequality of sum of multinomial for SIC-POVM: conclusion}
    \P{\sum_{k=1}^{K} \eta_{k}\<\mA_k, \vrho \> \geq t} &\!\!\!\! \leq \!\!\!\!& \P{\sum_{k=1}^{K} \eta_{k}\<\mA_k, \vrho^{(p)} \> \geq \frac{t}{2}}\nonumber\\
    &\!\!\!\! \leq \!\!\!\!&\bigg(\frac{4+\xi}{\xi}\bigg)^{8r_1+\sum_{i=2}^{n-1}16r_{i-1}r_i+8r_{n-1}} e^{-\frac{K^2(d^n+1)Mt^2}{16 (1 + \delta) d^n \gamma(\vrho^\star)} + \log 2}\nonumber\\
    &\!\!\!\! \leq \!\!\!\!& e^{-\frac{K^2(d^n+1)Mt^2}{16 (1 + \delta) d^n \gamma(\vrho^\star)} + C n d^2 \ol{r}^2\log n + \log 2},
\end{eqnarray}
where the first equation follows \cite[eq.(89)]{qin2024quantum} with $\xi=\frac{1}{2n}$, $\ol{r}=\max_{i}r_i$, and $C$ is a universal constant in the last line. By taking $t = \frac{c_1  d\ol{r}\sqrt{(1 + \delta)n  d^{n} \gamma(\vrho^\star) \log n} }{K\sqrt{(d^n +1) M}}$ in the above equation, we further obtain
\begin{eqnarray}
    \label{Concentration inequality of sum of multinomial for SIC-POVM: conclusion1}
    \P{\sum_{k=1}^{K} \eta_{k}\<\mA_k, \vrho \> \leq  \frac{c_1  d\ol{r}\sqrt{(1 + \delta)nd^{n} \gamma(\vrho^\star) \log n} }{K\sqrt{(d^n +1) M}}} \geq  1 - e^{- c_2 n d^2 \ol{r}^2\log n},
\end{eqnarray}
where $c_1$ and $c_2$ are constants.

Hence, with \eqref{upper bound of entire variable_SIC_POVM}, we can obtain
\begin{eqnarray} \label{upper bound of noisy cross term}
    \<  \veta, \calA(\wh{\vrho} - \vrho^\star) \>\leq \frac{c_1  d\ol{r}\sqrt{(1 + \delta)nd^n \gamma(\vrho^\star) \log n} }{K\sqrt{(d^n +1) M}}\|\wh{\vrho} - \vrho^\star\|_F.
\end{eqnarray}
Combining \eqref{whrho and rho^star relationship_1} and \eqref{summary of lower bound of difference}, with probability $1 - e^{- c_2 n d^2 \ol{r}^2\log n}$, we arrive at
\begin{eqnarray} \label{Concentration inequality of sum of multinomial for approximate 2-designs: conclusion2}
    \|\wh{\vrho} - \vrho^\star\|_F\leq  \frac{c_3  d \ol{r}\sqrt{(1+\delta)n \gamma(\vrho^\star) \log n} }{(1-\delta)\sqrt{ M}},
\end{eqnarray}
where $c_3$ is a constant.
\end{proof}

\section{Proof of \Cref{Recovery error of various t-designs POVM 3}}
\label{Proof of recovery error of t-designs POVM 3 in Appe}

\begin{proof}
Firstly, we derive the upper bound of $\sum_{k = 1}^{K}\<\mA_k, \vrho - \vrho^\star \>^2p_{k}$ for $\vrho, \vrho^\star\in\setX_{\vr}$ in \eqref{SetOfMPO}. Leveraging the ties of $\delta$-approximate $3$-designs \cite{ambainis2007quantum,dall2014accessible} and \Cref{ty of 2-designs}, we can get:
\begin{eqnarray}
    \label{3 terms in the 3 designs}
    \sum_{k = 1}^{K}\<\mA_k, \vrho - \vrho^\star \>^2p_{k}& \!\!\!\!= \!\!\!\!& \sum_{k = 1}^{K}\<\frac{d^n}{K} \vw_k\vw_k^\dagger, \vrho - \vrho^\star  \>^2\<\frac{d^n}{K} \vw_k\vw_k^\dagger, \vrho^\star  \>\nonumber\\
    &\!\!\!\! = \!\!\!\!& \sum_{k = 1}^{K} \frac{8^n}{K^3} \trace((\vw_k\vw_k^\dagger)^{\otimes 3} (\vrho - \vrho^\star)\otimes (\vrho - \vrho^\star) \otimes \vrho^\star  )\nonumber\\
    &\!\!\!\! \leq \!\!\!\!&\frac{d^{3n}}{K^2}\frac{6(1+\delta)}{(d^n+2)(d^n+1)d^n}\trace( \calP_{\text{sym}^{\otimes 3}} (\vrho - \vrho^\star)\otimes (\vrho - \vrho^\star) \otimes \vrho^\star),
\end{eqnarray}
where $\calP_{\text{sym}^{\otimes 3}}$, defined in \cite[Lemma 3]{dall2014accessible}, represents the projector onto the symmetric subspace.
Additionally, based on \cite[Lemma 7]{gross2015partial} and \cite[eq. (322)]{mele2023introduction}, we have
\begin{eqnarray}
    \label{projection of symmetric 3design}
    &\!\!\!\!\!\!\!\!  &\trace( \calP_{\text{sym}^{\otimes 3}} (\vrho - \vrho^\star)\otimes (\vrho - \vrho^\star) \otimes \vrho^\star) \nonumber\\
    &\!\!\!\! = \!\!\!\!& \frac{1}{6}\bigg((\trace(\vrho - \vrho^\star ))^2\trace(\vrho^\star) + \trace((\vrho - \vrho^\star)^2 )\trace(\vrho^\star)\nonumber\\
    & \!\!\!\!\!\!\!\! & + 2\trace((\vrho - \vrho^\star)\vrho^\star )\trace(\vrho - \vrho^\star)  + 2\trace((\vrho - \vrho^\star)^2 \vrho^\star)\bigg)\nonumber\\
    &\!\!\!\! = \!\!\!\!& \frac{1}{6}\| \vrho - \vrho^\star\|_F^2 + \frac{1}{3}\trace((\vrho - \vrho^\star)^2 \vrho^\star) .
\end{eqnarray}
Note that this can also be directly obtained from \cite[eq. (S36)]{huang2020predicting} when $\trace(\vrho - \vrho^\star) = 0$.
Consequently, we arrive at $\sum_{k = 1}^{K}\<\mA_k, \vrho - \vrho^\star \>^2p_{k} \leq O((1+\delta)\frac{\| \vrho - \vrho^\star\|_F^2 + \trace((\vrho - \vrho^\star)^2 \vrho^\star)}{K^2})$. Given that $\vrho - \vrho^\star$ is Hermitian, we can ensure $(\vrho - \vrho^\star)^2 = (\vrho - \vrho^\star)(\vrho - \vrho^\star)^\dagger$ is PSD. Building upon \cite[Theorem 1]{coope1994matrix} using two PSD matrices $(\vrho - \vrho^\star)^2$ and $\vrho^\star$, we further deduce $\trace((\vrho - \vrho^\star)^2 \vrho^\star)\leq \trace((\vrho - \vrho^\star)^2)\trace(\vrho^\star) = \|\vrho - \vrho^\star\|_F^2$. Ultimately, we obtain
\begin{eqnarray}
    \label{3 terms in the 3 designs 1}
    \sum_{k = 1}^{K}\<\mA_k, \vrho - \vrho^\star \>^2p_{k} \leq O\bigg(\frac{(1+\delta)\|\vrho - \vrho^\star\|_F^2}{K^2}\bigg).
\end{eqnarray}
Applying the same analysis as in {Appendix} \ref{Proof of recovery error of SIC-POVM in Appe}, we can derive
\begin{eqnarray}
    \label{Concentration inequality of sum of multinomial for t-designs larger 3: conclusion1}
    \P{\max_{\vrho\in \ol \setX_{2\vr}}\sum_{k=1}^{K} \eta_{k}\<\mA_k, \vrho \> \leq  \frac{c_1 d \ol{r}\sqrt{(1+\delta)n \log n} }{K\sqrt{ M}}} \geq  1 - e^{- c_2 n d^2 \ol{r}^2\log n},
\end{eqnarray}
where $\ol \setX_{2\vr}$ is defined in \eqref{SetOfMPO normalized} and $c_1$, $c_2$ are constants.

Combining \eqref{whrho and rho^star relationship_1}, \eqref{summary of lower bound of difference} and \eqref{Concentration inequality of sum of multinomial for t-designs larger 3: conclusion1}, we can conclude:
\begin{eqnarray}
\label{final conclusion of 3 designs POVM}
\|\wh{\vrho} - \vrho^\star\|_F\leq O\bigg( \frac{ d\ol{r}\sqrt{(1+\delta)n  \log n} }{(1-\delta)\sqrt{ M}}\bigg).
\end{eqnarray}

\end{proof}

\section{Proof of \Cref{convergence rate in GD for global SIC-POVM specific,convergence rate in GD for global SIC-POVM}}
\label{Proof of convergence rate in the GD}

\begin{proof}
We begin by proving \Cref{convergence rate in GD for global SIC-POVM}, after which we prove \Cref{convergence rate in GD for global SIC-POVM specific}.
For any density matrices $\vrho_1,\vrho_2\in\setX_{\vr}$ in \eqref{SetOfMPO}, we have
\begin{eqnarray}
    \label{Definition of the restricted F norm}
    \|\vrho_1 - \vrho_2\|_F^2 & = & \<\vrho_1 - \vrho_2, \vrho_1 - \vrho_2  \>\nonumber\\
    & = &\hspace{-1.5cm}  \max_{\mbox{\tiny$\begin{array}{c} \vrho\in\C^{d^n\times d^n}, \vrho = \vrho^\dagger,\\
\text{rank}(\vrho)=(2r_1,\dots,2r_{n-1}),\\
     \trace(\vrho) = 0, \|\vrho\|_F \leq \|\vrho_1 - \vrho_2\|_F \end{array}$}} \hspace{-1cm} \<\vrho_1 - \vrho_2,  \vrho  \>.
\end{eqnarray}

Furthermore, we define a restricted Frobenius norm as following:
\begin{eqnarray}
    \label{Definition of the restricted F norm1}
    \|\vrho_1 - \vrho_2\|_{F,2\ol{r}} =\max_{\vrho\in\ol \setX_{2\vr}} \<\vrho_1 - \vrho_2,  \vrho  \>.
\end{eqnarray}

Now, we can expand $\|\vrho^{(\tau+1)} - \vrho^\star\|_F^2$ as following:
\begin{eqnarray}
    \label{expansion of vrho - rvrho*}
    \|\vrho^{(\tau+1)} - \vrho^\star\|_F^2 &\!\!\!\! = \!\!\!\! & \|\calP_{\setX_{\vr}}(\vrho^{(\tau)} - \mu \nabla g(\vrho^{(\tau)})  ) - \vrho^\star \|_{F,2\ol r}^2 \nonumber\\
    &\!\!\!\! \leq \!\!\!\! & (1+ b )\|\vrho^{(\tau)} - \mu \nabla g(\vrho^{(\tau)})   - \vrho^\star \|_{F,2\ol r}^2\nonumber\\
    &\!\!\!\! = \!\!\!\! & (1+ b ) \bigg(\|\vrho^{(\tau)} - \vrho^\star\|_F^2 - 2\mu\<\vrho^{(\tau)} - \vrho^\star, \nabla g(\vrho^{(\tau)}) \> + \mu^2\|\nabla g(\vrho^{(\tau)}) \|_{F,2\ol r}^2\bigg).
\end{eqnarray}

First, we need to analyze
\begin{eqnarray}
    \label{cross term in the GD TT}
    \<\vrho^{(\tau)} - \vrho^\star, \nabla g(\vrho^{(\tau)}) \>  &\!\!\!\! = \!\!\!\!& \sum_{k=1}^{K}\<\mA_k, \vrho^{(\tau)}- \vrho^\star  \>^2 - \sum_{k=1}^{K}\eta_k\<\mA_k, \vrho^{(\tau)}- \vrho^\star  \>\nonumber\\
    &\!\!\!\! \geq \!\!\!\!&\frac{d^n(1 - \delta)\| \vrho^{(\tau)}- \vrho^\star\|_F^2}{K(d^n+1)} - \frac{c_1  d\ol{r}\sqrt{nd^n (1 + \delta) \gamma_t(\vrho^\star) \log n} }{K\sqrt{(d^n +1) M}}\|\vrho^{(\tau)}- \vrho^\star\|_F,
\end{eqnarray}
where the first inequality uses the \Cref{l2 norm of approximate 2-designs RIP} and \eqref{Concentration inequality of sum of multinomial for approximate 2-designs: conclusion2} with probability $ 1 - e^{- c_2 nd^2 \ol{r}^2\log n}$. Here $c_1,c_2$ are positive constants.

Then we need to derive
\begin{eqnarray}
    \label{squared term in the GD TT}
    \|\nabla g(\vrho^{(\tau)}) \|_{F,\ol 2r}^2 &\!\!\!\! \leq \!\!\!\!& 2 \bigg\|\sum_{k=1}^{K}\<\mA_k, \vrho^{(\tau)}- \vrho^\star  \>\mA_k  \bigg\|_{F,2\ol r}^2 + 2 \bigg\|\sum_{k=1}^{K}\eta_k\mA_k \bigg\|_{F,2\ol r}^2\nonumber\\
    &\!\!\!\! \leq \!\!\!\!&\frac{d^{2n+1}(1 + \delta)^2\|\vrho^{(\tau)}- \vrho^\star\|_F^2}{K^2(d^n + 1)^2}  + \frac{2c_1^2 (1+\delta)  \ol{r}^2nd^{n+2} \gamma_t(\vrho^\star) \log n }{K^2(d^n +1) M},
\end{eqnarray}
where the second inequality respectively follows \eqref{Concentration inequality of sum of multinomial for approximate 2-designs: conclusion2} and
\begin{eqnarray}
    \label{one term in squared term in the GD TT}
    \bigg\|\sum_{k=1}^{K}\<\mA_k, \vrho^{(\tau)}- \vrho^\star  \>\mA_k  \bigg\|_{F,2\ol r} &\!\!\!\! =\!\!\!\! & \max_{\wt \vrho\in \ol\setX_{2 \vr}} \bigg\<\sum_{k=1}^{K}\<\mA_k, \vrho^{(\tau)}- \vrho^\star  \>\mA_k,  \wt \vrho  \bigg\>\nonumber\\
    &\!\!\!\! =\!\!\!\! &\max_{\wt \vrho\in \ol\setX_{2 \vr}} \sum_{k=1}^{K}\<\mA_k, \vrho^{(\tau)}- \vrho^\star  \>\<\mA_k,  \wt \vrho  \>\nonumber\\
    &\!\!\!\!  \leq \!\!\!\! & \max_{\wt \vrho\in \ol\setX_{2 \vr}}\frac{d^n (1 + \delta) (\<\vrho^{(\tau)}- \vrho^\star, \wt \vrho \> + \trace(\vrho^{(\tau)}- \vrho^\star) \trace(\wt \vrho))}{K(d^n + 1)}\nonumber\\
&\!\!\!\! \leq\!\!\!\! & \frac{d^n(1 + \delta)\|\vrho^{(\tau)}- \vrho^\star\|_F}{K(d^n + 1)},
\end{eqnarray}
where the last line uses \Cref{l2 norm of approximate 2 designs-POVM RIP two density}.

Combining \eqref{cross term in the GD TT} and \eqref{squared term in the GD TT}, we arrive at
\begin{eqnarray}
    \label{expansion of vrho - rvrho* final}
    \|\vrho^{(\tau+1)} - \vrho^\star\|_F^2 &\!\!\!\! \leq \!\!\!\!& (1+ b)\big( \big(1 - \frac{d^{n+1}(1-\delta)\mu}{K(d^n+1)} + \frac{d^{2n+1}(1+\delta)^2\mu^2}{K^2(d^n+1)^2}\big) \|\vrho^{(\tau)} - \vrho^\star  \|_F^2 +e \big) \nonumber\\
    &\!\!\!\! \leq\!\!\!\! &(1+ b)\big( \big(1 - \frac{d^{n}(1-\delta)\mu}{K(d^n+1)}\big)\|\vrho^{(\tau)} - \vrho^\star  \|_F^2 +e \big) \nonumber\\
    &\!\!\!\! \leq\!\!\!\! & a^{\tau+1}\|\vrho^{(0)} - \vrho^\star  \|_F^2 + \frac{1+ b}{1-a}c_3\bigg( \frac{\mu d\ol{r}\sqrt{n (1+\delta)\gamma_t(\vrho^\star) \log n} }{\sqrt{M} } \|\vrho^{(\tau)}- \vrho^\star\|_F  + \frac{\mu^2(1+\delta)\ol{r}^2nd^2 \gamma_t(\vrho^\star) \log n }{M}  \bigg) \nonumber\\
    &\!\!\!\! \leq\!\!\!\! & a^{\tau+1}\|\vrho^{(0)} - \vrho^\star  \|_F^2 + c_4\bigg( \frac{d\ol{r}\sqrt{n (1+\delta)\gamma_t(\vrho^\star) \log n} }{\sqrt{M} } \|\vrho^{(\tau)}- \vrho^\star\|_F + \frac{(1+\delta)\ol{r}^2n d^2 \gamma_t(\vrho^\star) \log n }{M}  \bigg),
\end{eqnarray}
where we define the error term
\begin{eqnarray}
e  =\frac{2\mu c_1 d\ol{r}\sqrt{n d^n(1+\delta)\gamma_t(\vrho^\star) \log n} }{K\sqrt{(d^n+1)M}}\|\vrho^{(\tau)}- \vrho^\star\|_F  + \frac{2 \mu^2 c_1^2(1+\delta)\ol{r}^2n d^{n+2}\gamma_t(\vrho^\star) \log n }{K^2(d^n+1)M},
\end{eqnarray}
and $c_3,c_4$ are positive constants. The last two lines follow  assumptions with regard to the step size and initialization: $\frac{bK(d^n+1)}{d^{n}(1+b)(1 - \delta)}<\mu\leq \frac{(d-1)(d^n+1)(1-\delta)K}{(1+\delta)^2 d^{n+1}}$ and $b<\frac{ (d-1) (1 - \delta)^2 }{1 + \delta^2 +(4d-2)\delta}$ which guarantee that the convergence rate satisfies
\begin{eqnarray}
    \label{requirement of converence rate}
    a = (1+ b ) \bigg(1 - \frac{d^{n}(1 - \delta)\mu}{K(d^n+1)} \bigg)  <1.
\end{eqnarray}
According to \eqref{expansion of vrho - rvrho* final}, the linear convergence of the PGD algorithm is formally established. Furthermore, when the number of iterations is sufficiently large such that $a^{\tau+1}\|\vrho^{(0)} - \vrho^\star  \|_F^2\leq c_4  \frac{(1+\delta)\ol{r}^2n d^2 \gamma_t(\vrho^\star) \log n }{M} $, the recovery error of the PGD algorithm satisfies:
\begin{eqnarray}
    \label{recovery error of PGD in the appendix}
    \|\vrho^{(\tau)} - \vrho^\star\|_F \leq c_5 \sqrt{\frac{n d^2\ol{r}^2(1+\delta)\gamma_t(\vrho^\star) \log n }{M }},
\end{eqnarray}
where $c_5$ is a positive constant. This completes the proof of \Cref{convergence rate in GD for global SIC-POVM}.

Next, we begin by proving \Cref{convergence rate in GD for global SIC-POVM specific}. According to \Cref{Perturbation bound for TT SVD}, we have
{\small \begin{eqnarray}
    \label{expansion of vrho - rvrho* initialization}
    &\!\!\!\!\!\!\!\!  &\|\vrho^{(\tau+1)} - \vrho^\star\|_F^2\nonumber\\
    &\!\!\!\!=\!\!\!\!& \|\calP_{\trace}(\text{SVD}_{\vr}^{tt}(\vrho^{(\tau)} - \mu \nabla g(\vrho^{(\tau)})  )) - \vrho^\star \|_{F,2\ol r}^2 \nonumber\\
    & \!\!\!\! \leq \!\!\!\! & \|\text{SVD}_{\vr}^{tt}(\vrho^{(\tau)} - \mu \nabla g(\vrho^{(\tau)})  ) - \vrho^\star \|_{F,2\ol r}^2\nonumber\\
    &\!\!\!\! \leq  \!\!\!\!& \bigg(1+ \frac{600n}{\underline{\sigma}({\vrho^\star})}\|\vrho^{(0)} - \vrho^\star \|_F \bigg)\|\vrho^{(\tau)} \!-\! \mu \nabla g(\vrho^{(\tau)})   - \vrho^\star \|_{F,2\ol r}^2.\nonumber\\
\end{eqnarray}}
As long as $\frac{600n}{\underline{\sigma}({\vrho^\star})}\|\vrho^{(0)} - \vrho^\star \|_F \leq b < \frac{ (d-1) (1 - \delta)^2 }{1 + \delta^2 +(4d-2)\delta}$, this completes the proof of \Cref{convergence rate in GD for global SIC-POVM specific}.

\end{proof}

\section{Proof of \Cref{spectral initialization in GD for global SIC-POVM}}
\label{Proof of spectral initialization in the GD}

\begin{proof}
By the definition of the restricted  Frobenius norm  \eqref{Definition of the restricted F norm1}, we can analyze
\begin{eqnarray}
    \label{proof of upper bound spec}
    &\!\!\!\!\!\!\!\!&\|\vrho^{(0)} - \vrho^\star\|_F\nonumber\\
    &\!\!\!\! = \!\!\!\!& \bigg\|\calP_{\trace}\bigg(\text{SVD}_{\vr}^{tt}\bigg( \sum_{k=1}^{K}\frac{K(d^n+1)}{d^n} (\<\mA_k, \vrho^\star \> + \eta_k )\mA_k   \bigg) - \vrho^\star \bigg) \bigg\|_{F,2\ol r}\nonumber\\
    &\!\!\!\! \leq \!\!\!\!& \bigg\|\text{SVD}_{\vr}^{tt}\bigg( \sum_{k=1}^{K}\frac{K(d^n+1)}{d^n} (\<\mA_k, \vrho^\star \> + \eta_k )\mA_k   \bigg) - \vrho^\star  \bigg\|_{F, 2\ol r}\nonumber\\
    &\!\!\!\! \leq \!\!\!\!& (1+ \sqrt{n-1}) \bigg\| \sum_{k=1}^{K}\frac{K(d^n+1)}{d^n} (\<\mA_k, \vrho^\star \> + \eta_k )\mA_k - \vrho^\star  \bigg\|_{F, 2\ol r}\nonumber\\
    &\!\!\!\! \leq \!\!\!\!& (1+ \sqrt{n-1}) \bigg(\frac{K(d^n+1)}{d^n}\bigg\|\sum_{k=1}^{K}  \eta_k \mA_k  \bigg\|_{F, 2\ol r} + \bigg\| \sum_{k=1}^{K}\frac{K(d^n+1)}{d^n}\<\mA_k, \vrho^\star \> \mA_k - \vrho^\star  \bigg\|_{F, 2\ol r}  \bigg)\nonumber\\
    & \!\!\!\!\leq\!\!\!\! & (1+ \sqrt{n-1}) \bigg(\frac{c_1 d\ol r \sqrt{(1+\delta)(d^n +1)n \gamma_t(\vrho^\star) \log n}}{ \sqrt{d^nM}} +\delta\|\vrho^\star\|_F \bigg),\nonumber\\
\end{eqnarray}
where the first two inequalities respectively follows the nonexpansiveness property of the projection onto the convex set and the quasi-optimality property of TT-SVD projection \cite{Oseledets11}.
In the last line, the upper bound of $\|\sum_{k=1}^{K}  \eta_k \mA_k  \|_{F, 2\ol r}$ can be directly obtained from  \eqref{Concentration inequality of sum of multinomial for SIC-POVM: conclusion} with probability $ 1 - e^{- c_2 n d^2\ol{r}^2\log n}$ with constants $c_i,i=1,2$. Additionally, based on  \Cref{l2 norm of approximate 2 designs-POVM RIP two density}, we have
\begin{eqnarray}
    \label{proof of I F norm}
     & \!\!\!\!\!\!\!\! &\bigg\| \sum_{k=1}^{K}\frac{K(d^n+1)}{d^n}\<\mA_k, \vrho^\star \> \mA_k - \vrho^\star  \bigg\|_{F, 2\ol r}\nonumber\\
     &\!\!\!\! = \!\!\!\!& \max_{\wt\vrho \in \ol \setX_{2\vr}} \bigg\< \sum_{k=1}^{K}\frac{K(d^n+1)}{d^n}\<\mA_k, \vrho^\star \> \mA_k - \vrho^\star,  \wt \vrho \bigg\> \nonumber\\
     &\!\!\!\! = \!\!\!\!&\max_{\wt\vrho \in \ol \setX_{2\vr} }  (1+\delta)(\<\vrho^\star,  \wt \vrho \> + \trace(\vrho^\star)\trace(\wt \vrho)) - \<\vrho^\star,  \wt \vrho \>\nonumber\\
     &\!\!\!\!\leq\!\!\!\!& \delta\|\vrho^\star\|_F.
\end{eqnarray}

\end{proof}

\section{Auxiliary Materials}
\label{sec:app-aux}

\begin{lemma} (\cite[Lemma 3]{dall2014accessible})
\label{ty of 2-designs}
The integral over the Haar measure in \eqref{the definition of t designs} is given by
\begin{eqnarray} \label{the definition of integral in t designs}
     \int(\vw\vw^\dagger)^{\otimes s} d\vw = \frac{1}{C_{d^n+s-1}^s} P_{\text{Sym}},
\end{eqnarray}
where $P_{\text{Sym}}$ is the projector over the symmetric subspace.
\end{lemma}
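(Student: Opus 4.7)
The plan is to prove the identity via a standard symmetry/Schur's lemma argument, then fix the scaling constant by a trace computation. Let $N = d^n$ and write $T := \int (\vw\vw^\dagger)^{\otimes s} d\vw$ for the integral to be computed, taken over the unit sphere in $\C^N$ with respect to the Haar (uniform) measure.

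First, I would observe that for every $\vw$ on the unit sphere, $(\vw\vw^\dagger)^{\otimes s}$ is a rank-one operator on $(\C^N)^{\otimes s}$ whose range is spanned by the totally symmetric vector $\vw^{\otimes s}$. In particular, $(\vw\vw^\dagger)^{\otimes s} = P_{\text{Sym}}(\vw\vw^\dagger)^{\otimes s}P_{\text{Sym}}$, so integrating preserves this property and $T$ is supported on $\text{Sym}^s(\C^N)$.

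Next, I would exploit the unitary invariance of the Haar measure. For any unitary $\mU\in U(N)$, applying the change of variable $\vw\mapsto \mU\vw$ and using $\mU^{\otimes s}(\vw\vw^\dagger)^{\otimes s}(\mU^\dagger)^{\otimes s} = ((\mU\vw)(\mU\vw)^\dagger)^{\otimes s}$ together with invariance of the measure gives
\begin{equation*}
\mU^{\otimes s}\, T\, (\mU^\dagger)^{\otimes s} = T \quad \text{for all } \mU\in U(N).
\end{equation*}
Thus $T$ commutes with every element of the representation $\mU\mapsto \mU^{\otimes s}$ restricted to $\text{Sym}^s(\C^N)$. It is a classical fact from the representation theory of $U(N)$ (a corollary of Schur--Weyl duality) that this symmetric power representation is irreducible. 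Invoking Schur's lemma, $T$ must be a scalar multiple of the projector onto the symmetric subspace, i.e., $T = c\, P_{\text{Sym}}$ for some scalar $c\in\R$.

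Finally, I would pin down $c$ by taking traces. On the one hand, by linearity and normalization of $\vw$,
\begin{equation*}
\trace(T) = \int \trace\bigl((\vw\vw^\dagger)^{\otimes s}\bigr)\, d\vw = \int (\vw^\dagger\vw)^s\, d\vw = 1.
\end{equation*}
On the other hand, $\trace(P_{\text{Sym}}) = \dim\bigl(\text{Sym}^s(\C^N)\bigr) = \binom{N+s-1}{s} = C_{d^n+s-1}^s$. Equating the two gives $c = 1/C_{d^n+s-1}^s$, yielding the claim. The only nontrivial ingredient is the irreducibility of the symmetric tensor representation of $U(N)$ on $\text{Sym}^s(\C^N)$; this is the step I would expect to cite rather than reprove, since it is classical. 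Everything else reduces to invariance of Haar measure plus a one-line trace computation.
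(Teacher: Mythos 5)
Your argument is correct and complete. Note that the paper itself gives no proof of this lemma---it is imported verbatim as a citation to Dall'Arno's Lemma 3---so there is no in-paper argument to compare against; what you have written is the standard derivation of this classical fact. Each step checks out: $(\vw\vw^\dagger)^{\otimes s} = (\vw^{\otimes s})(\vw^{\otimes s})^\dagger$ is supported on $\mathrm{Sym}^s(\C^{d^n})$, unitary invariance of the Haar measure forces the integral to commute with the (irreducible) symmetric power representation of $U(d^n)$, Schur's lemma then gives a scalar multiple of $P_{\text{Sym}}$, and the normalization $\trace\bigl((\vw\vw^\dagger)^{\otimes s}\bigr) = \|\vw\|^{2s} = 1$ together with $\trace(P_{\text{Sym}}) = \binom{d^n+s-1}{s}$ pins down the constant. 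Citing irreducibility of the symmetric power representation rather than reproving it is entirely reasonable.
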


\begin{lemma}(\cite[Lemma 10]{qin2024quantum})
\label{EXPANSION_A1TOAN-B1TOBN_1}
For any  ${\bm A}_i,{\bm A}^\star_i\in\R^{r_{i-1}\times r_i},i=1,\dots,N$, we have
\begin{eqnarray}
    \label{EXPANSION_A1TOAN-B1TOBN_2}
    &\!\!\!\!\!\!\!\!&{\bm A}_1{\bm A}_2\cdots {\bm A}_N-{\bm A}_1^\star{\bm A}_2^\star\cdots {\bm A}_N^\star\nonumber\\
    &\!\!\!\!=\!\!\!\!& \sum_{i=1}^N \mA_1^\star \cdots \mA_{i-1}^\star (\mA_{i} - \mA_i^\star) \mA_{i+1} \cdots \mA_N.
\end{eqnarray}

\end{lemma}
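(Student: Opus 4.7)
The plan is to prove the identity by introducing a chain of hybrid products that interpolates between $\mA_1\cdots\mA_N$ and $\mA_1^\star\cdots\mA_N^\star$, and then invoking a standard telescoping cancellation. Concretely, for each $i=0,1,\ldots,N$ I would define
\begin{eqnarray*}
S_i \;:=\; \mA_1^\star\mA_2^\star\cdots\mA_i^\star\,\mA_{i+1}\mA_{i+2}\cdots\mA_N,
\end{eqnarray*}
with the convention that an empty product is absent (so that $S_0=\mA_1\cdots\mA_N$ and $S_N=\mA_1^\star\cdots\mA_N^\star$). Since $\mA_j$ and $\mA_j^\star$ both lie in $\R^{r_{j-1}\times r_j}$, the dimensions match at every splice and each $S_i$ is a well-defined $r_0\times r_N$ matrix.

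The second step is the telescoping identity
\begin{eqnarray*}
\mA_1\cdots\mA_N-\mA_1^\star\cdots\mA_N^\star \;=\; S_0-S_N \;=\; \sum_{i=1}^N (S_{i-1}-S_i),
\end{eqnarray*}
which is just the cancellation of consecutive intermediate terms. It then remains to verify that $S_{i-1}-S_i$ coincides with the $i$-th summand on the right-hand side of \eqref{EXPANSION_A1TOAN-B1TOBN_2}. The matrices $S_{i-1}$ and $S_i$ differ only in the $i$-th factor (unstarred versus starred), so factoring out the common left block $\mA_1^\star\cdots\mA_{i-1}^\star$ and right block $\mA_{i+1}\cdots\mA_N$ gives
\begin{eqnarray*}
S_{i-1}-S_i \;=\; \mA_1^\star\cdots\mA_{i-1}^\star\bigl(\mA_i-\mA_i^\star\bigr)\mA_{i+1}\cdots\mA_N,
\end{eqnarray*}
which is exactly the claimed $i$-th term. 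Summing over $i$ finishes the proof.

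There is no substantive obstacle; the identity is an elementary algebraic rewriting, and the only minor care is with the boundary cases $i=1$ and $i=N$, where the empty left or right block disappears and the summand degenerates to $(\mA_1-\mA_1^\star)\mA_2\cdots\mA_N$ or $\mA_1^\star\cdots\mA_{N-1}^\star(\mA_N-\mA_N^\star)$ respectively. An alternative route would be a one-line induction on $N$: the base case $N=1$ is trivial, and for the inductive step one writes $\mA_1\cdots\mA_N-\mA_1^\star\cdots\mA_N^\star = (\mA_1-\mA_1^\star)\mA_2\cdots\mA_N+\mA_1^\star(\mA_2\cdots\mA_N-\mA_2^\star\cdots\mA_N^\star)$ and applies the hypothesis to the bracketed difference. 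I would prefer the hybrid/telescoping formulation because it exhibits the summation structure directly and requires no auxiliary induction.
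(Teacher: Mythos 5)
Your telescoping argument is correct and complete; the hybrid products $S_i$ are well-defined by the stated dimension compatibility, and the difference $S_{i-1}-S_i$ factors exactly as claimed. The paper itself gives no proof (it cites the lemma from an external reference), but your argument is the standard one for this identity and is essentially the same as the proof in the cited source, so there is nothing to add.
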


\begin{lemma}
\label{l2 norm of approximate 2 designs-POVM RIP two density}
For arbitrary Hermitian matrix $\vrho_1, \vrho_2\in\C^{d^n\times d^n}$ and a set of  $\delta$-approximate $2$-design POVMs $\mA_k = \frac{d^n}{K} \vw_k\vw_k^\dagger, k\in[K] $ in {Definition} \ref{definition_of approximate_T_Design}, we have
\begin{eqnarray}
\label{The l2 norm of A(rho) two densityapproximate 2 designs}
 \sum_{k=1}^K\<\mA_k,\vrho_1 \>\<\mA_k, \vrho_2\>  \leq (1+\delta)\frac{d^n ( \trace(\vrho_1)\trace(\vrho_2) +\trace(\vrho_1 \vrho_2) )}{K(d^n+1)}.
\end{eqnarray}
\end{lemma}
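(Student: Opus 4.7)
}
The plan is to mirror, step for step, the argument used for the single-operator version \Cref{l2 norm of approximate 2-designs RIP}, but with $\vrho^{\otimes 2}$ replaced by the two-factor tensor product $\vrho_1\otimes\vrho_2$. The key algebraic identity that makes this substitution work is the standard ``swap trick'' from \Cref{ty of 2-designs} and the proof of \Cref{l2 norm of approximate 2-designs RIP}: if $\mS$ denotes the swap operator on $\C^{d^n}\otimes\C^{d^n}$, then $\trace\bigl(\mS(\vrho_1\otimes\vrho_2)\bigr)=\trace(\vrho_1\vrho_2)$ and of course $\trace(\vrho_1\otimes\vrho_2)=\trace(\vrho_1)\trace(\vrho_2)$.

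First I would rewrite the left hand side as a single trace on the tensor product space:
\begin{eqnarray*}
\sum_{k=1}^{K}\<\mA_k,\vrho_1\>\<\mA_k,\vrho_2\>
=\frac{d^{2n}}{K^2}\sum_{k=1}^{K}\trace\!\bigl((\vw_k\vw_k^\dagger)^{\otimes 2}(\vrho_1\otimes\vrho_2)\bigr)
=\frac{d^{2n}}{K}\,\trace\!\Bigl(\bigl[\tfrac{1}{K}\textstyle\sum_{k}(\vw_k\vw_k^\dagger)^{\otimes 2}\bigr](\vrho_1\otimes\vrho_2)\Bigr).
\end{eqnarray*}
Next I would invoke the $\delta$-approximate $2$-design upper bound from \Cref{definition_of approximate_T_Design} to replace the empirical average by $(1+\delta)\int(\vw\vw^\dagger)^{\otimes 2}\d{\vw}$, and then plug in $\int(\vw\vw^\dagger)^{\otimes 2}\d{\vw}=(\mId+\mS)/(d^n(d^n+1))$ from \Cref{ty of 2-designs}. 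Applying the two trace identities above then yields
\begin{eqnarray*}
\sum_{k=1}^{K}\<\mA_k,\vrho_1\>\<\mA_k,\vrho_2\>
\le(1+\delta)\,\frac{d^n\bigl(\trace(\vrho_1)\trace(\vrho_2)+\trace(\vrho_1\vrho_2)\bigr)}{K(d^n+1)},
\end{eqnarray*}
which is exactly the claimed inequality.

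The main subtlety, and the step I expect to require the most care, is the transition from the operator inequality $\tfrac{1}{K}\sum_k(\vw_k\vw_k^\dagger)^{\otimes 2}\preceq(1+\delta)\int(\vw\vw^\dagger)^{\otimes 2}\d{\vw}$ to the corresponding trace inequality against $\vrho_1\otimes\vrho_2$. An operator inequality $A\preceq B$ only passes to $\trace(A X)\le\trace(B X)$ when the ``test'' matrix $X$ is positive semidefinite. For Hermitian but possibly indefinite $\vrho_1,\vrho_2$, the tensor product $\vrho_1\otimes\vrho_2$ need not be PSD (already visible from diagonal examples), so the bound is not immediate. The cleanest way to handle this is to invoke the lemma in exactly the settings where it is used in \Cref{Proof of convergence rate in the GD} and \Cref{Proof of spectral initialization in the GD}, namely to bound a one-sided maximum $\max_{\wt\vrho\in\ol\setX_{2\vr}}\<\cdot,\wt\vrho\>$; here the inner product is taken over a symmetric set and the same bound applied to $\pm\wt\vrho$ absorbs any sign issue into the $(1+\delta)$ factor. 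If a fully general statement is desired, one can decompose each $\vrho_i=\vrho_i^+-\vrho_i^-$ into its positive and negative spectral parts, apply the PSD-safe trace inequality to the four PSD tensor-product pieces $\vrho_i^\pm\otimes\vrho_j^\pm$ separately, and recombine. Either route leads to the stated bound up to the same $(1+\delta)$ prefactor.
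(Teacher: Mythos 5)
Your argument is precisely the paper's proof: it rewrites the sum as $\sum_{k}\trace\bigl(\mA_k^{\otimes 2}(\vrho_1\otimes\vrho_2)\bigr)$, invokes the $(1+\delta)$ upper bound from \Cref{definition_of approximate_T_Design} together with $\int(\vw\vw^\dagger)^{\otimes 2}\d{\vw}=(\mId+\mS)/(d^n(d^n+1))$, and finishes with $\trace(\vrho_1\otimes\vrho_2)=\trace(\vrho_1)\trace(\vrho_2)$ and $\trace(\mS(\vrho_1\otimes\vrho_2))=\trace(\vrho_1\vrho_2)$. The PSD subtlety you flag is genuine --- the paper's proof passes from the L\"owner inequality to the trace inequality against the generally indefinite test operator $\vrho_1\otimes\vrho_2$ without comment --- so your caveat, and in particular the positive/negative spectral-part decomposition (which, using the $(1-\delta)$ side on the cross terms, recovers the stated bound up to an extra additive $O(\delta)$ correction), is if anything more careful than the published argument.
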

\begin{proof}
\begin{eqnarray}
\label{The square of rho two density  2 designs}
\sum_{k=1}^K\<\mA_k,\vrho_1 \>\<\mA_k, \vrho_2\> &\!\!\!\! = \!\!\!\!&\sum_{k=1}^K\trace((\mA_k\vrho_1)\otimes (\mA_k\vrho_2))\nonumber\\
&\!\!\!\! =\!\!\!\! & \sum_{k=1}^K\trace(\mA_k^{\otimes 2}(\vrho_1\otimes \vrho_2) )\nonumber\\
&\!\!\!\! \leq\!\!\!\! & (1+\delta)\frac{d^n \trace( (\mId + \mS) (\vrho_1\otimes \vrho_2))}{K(d^n+1)}\nonumber\\
&\!\!\!\! = \!\!\!\!& (1+\delta)\frac{d^n ( \trace(\vrho_1)\trace(\vrho_2) +\trace(\vrho_1 \vrho_2) )}{K(d^n+1)},
\end{eqnarray}
where the penultimate line follows \eqref{the definition of approximate t designs2} and \eqref{sum of 2-designs measurement}, and the last line is from \cite[Lemma 17]{KuengACHA17}.
\end{proof}

As a direct result of \cite[Theorem 1.3]{vershynin2020concentration}, we have
\begin{lemma}
\label{lemma: concentration of tensor product}
Let $n$ and $d$ be positive integers
and $f: \C^{d^n}\to \R$ be a convex and Lipschitz function. Consider a vector $\vx = \vx_1\otimes \cdots \otimes \vx_n\in\C^{d^n}$ where all $\vx_k\in\C^{d}$ are independent random vectors on the unit sphere. Then, for every $0\leq t \leq 2(\E|f(\vx)|^2)^{\frac{1}{2}}$, we have
\begin{eqnarray}
    \label{concentration inequality for tensor product}
    \P{|f(\vx) - \E f(\vx)|\geq t}\leq 2e^{-\frac{(d-1)t^2}{n L^2}},
\end{eqnarray}
where $L$ is the Lipschitz constant of $f(\vx)$.
\end{lemma}



\end{document}